\newtheorem{theorem}{Theorem}[section]
\theoremstyle{remark}
\theoremstyle{plain}
\begin{document}


\title{\vspace{-50pt} Objective Bayesian Analysis of the Yule--Simon Distribution with Applications}

\author{
\hspace{-20pt} 
Fabrizio Leisen\textsuperscript{1} \hspace{15pt}
 Luca Rossini\textsuperscript{2} \hspace{15pt} Cristiano Villa\textsuperscript{1}\footnote{Corresponding Authors: Fabrizio Leisen (\href{fabrizio.leisen@gmail.com}{fabrizio.leisen@gmail.com}); Luca Rossini (\href{luca.rossini@unive.it}{luca.rossini@unive.it}); Cristiano Villa (\href{cv88@kent.ac.uk}{cv88@kent.ac.uk})}
        \\ 
        \vspace{5pt}
        \\
        {\centering {\small \textsuperscript{1} 
        University of Kent, U.K. \hspace{18pt}
        \textsuperscript{2}
            Ca' Foscari University of Venice, Italy}} \vspace{5pt} \\
     }

\date{\today}
\maketitle

\abstract{ The Yule--Simon distribution is usually employed in the analysis of frequency data. As the Bayesian literature, so far, ignored this distribution, here we show the derivation of two objective priors for the parameter of the Yule--Simon distribution. In particular, we discuss the Jeffreys prior and a loss-based prior, which has recently appeared in the literature. We illustrate the performance of the derived priors through a simulation study and the analysis of real datasets.

\textbf{Keywords: } Kullback-Leibler divergence, Loss-based prior, Objective Bayes, Social Network daily returns, Text Analysis.
}

\section{Introduction}
\label{Intro}
In this work we aim to fill a gap in the Bayesian literature by proposing two objective priors for the parameter of the Yule--Simon distribution. The distribution was firstly discussed in \cite{Yule25} and then re-proposed in \cite{Simon55}, and can be used in scenarios where the center of interest is some sort of frequency in the data. For example, \cite{Yule25} used it to model abundance of biological genera, while \cite{Simon55} exploited the distribution properties to model the addition of new words to a text. It goes without saying that other areas of applications can be considered where, for instance, frequencies represent the elementary unit of observation. For example, in this paper we show the employment of the Yule--Simon distribution in modelling daily increments of social network stock options, surnames and 'superstar' success in the music industry.

Despite the wide range of applications, the literature on the Yule--Simon distribution appears to be limited. And, more surprisingly, to the best of our knowledge it seems that no attention has been given to the problem by the Bayesian community. Given the challenges that classical inference faces in estimating the parameter of the distribution \citep{Garcia11}, the possibility of tackling the problem from a Bayesian perspective is, undoubtedly, appealing.

In addressing the estimation of the shape parameter of the Yule--Simon distribution by means of the Bayesian framework, we opted for an objective approach. We propose two priors: the first is the Jeffreys rule prior \citep{JE61}, while the second is obtained by applying the loss-based approach discussed in \cite{VillaWalker15}. Although we formally introduce the Yule--Simon distribution and its derivation in the next Section, it is important to give an anticipation of the general idea here, so to fully appreciate the gain in adopting an objective approach. As nicely illustrated in \cite{ChungCox94}, the shape parameter of the distribution is linked via a one-to-one transformation to the probability that the next observation will not take a value  previously observed. For example, if we have observed $n$ words in a text, we wish to make inference on the probability that the $(n+1)$ observation is a word not yet encountered in the text, assuming this probability to be constant. It is then clear that the Yule--Simon distribution models extremely large events. As such, the information in the data about these events is limited and a ``wrongly'' elicited prior could end up dominating the data. On the other hand, a prior with minimal information content would allow the data ``to speak'', resulting in a more robust inferential procedure. We do not advocate that in every circumstance an objective approach is the only suitable. In fact, if reliable prior information is available, an elicited prior would represent, in general, the natural choice. Alas, in the presence of phenomena with extremely rare events, the above information is often insufficient or incomplete, and an objective choice would then represent the most sensible one.\\

The paper is organized as follows. In Section \ref{Prelim} we set the scene by introducing the Yule--Simon distribution and the notation that will be used throughout the paper. The proposed objective priors are derived and discussed in Section \ref{ObjectivePriors}. Section \ref{Simu} collects the analysis of the frequentist performances of the posterior distributions yielded by the proposed priors. Through a set of several simulation scenarios, we compare and analyse the inferential capacity of the objective priors here discussed. In Section \ref{RealData} we illustrate the application of the priors to three real-data applications. Finally, Section \ref{Concl} is reserved to concluding remarks and points of discussion.

\section{Preliminaries}
\label{Prelim}
The most known functional form of the Yule--Simon distribution, possibly, is the following:
\begin{equation}
f(k;\rho) = \rho\, \mbox{B}(k,\rho+1), \qquad k=1,2,\ldots \mbox{ and } \rho>0, \label{Yule_Org}
\end{equation}
where $\mbox{B}(\cdot,\cdot)$ is the beta function and $\rho$ is the shape parameter. The distribution in \eqref{Yule_Org} was firstly proposed by \cite{Yule25} in the field of biology; in particular, to represent the distribution of species among genera in some higher taxon of biotic organisms.
More recently, \cite{Simon55} noticed that the above distribution can be observed in other phenomena, which appear to have no connection among each others. These include, the distribution of word frequencies in texts, the distribution of authors by number of scientific articles published, the distribution of cities by population and the distribution of incomes by size. The derivation process followed by \cite{Yule25} was based on word frequencies, and it consisted of two assumptions:
\begin{itemize}
	\item[(i)] The probability that the $(n+1)$-th word is a word observed $k$ times in the first $n$ words, is proportional to $k$; and
	\item[(ii)] The probability that the $(n+1)$-th word is new (i.e. not being observed in the first $n$ words) is constant and equal to $\alpha\in(0,1)$.
\end{itemize}
\cite{Yule25} shows that, under the condition of stationarity,  the process defined by the above two assumptions yields \eqref{Yule_Org} by setting $\rho = 1/(1-\alpha)$, obtaining:
\begin{equation}
f(k;\alpha) = \frac{1}{1-\alpha}\mbox{B}\left(k,\frac{1}{1-\alpha}+1\right). \label{Disc}
\end{equation}
An important consequence of the above assumption (ii) is that the shape parameter $\rho$ of the distribution takes values in $(1,+\infty)$. In other words, should we use the model as in \cite{Yule25}, which includes the possibility that $0<\rho\leq1$, we would loose the interpretation of the generating process described by the two assumptions above. In fact, for $\rho<1$, the probability of observing a new word would be negative; while for $\rho=1$ the probability would be zero, rendering the process trivial (i.e. all the observed words will be equal to the first one observed). Furthermore, the expectation of the Yule--Simon distribution is defined only for values of the shape parameter larger than one, and this property is something one would expect in most applications. For all the above reasons, in this work we focus on the parametrization of the Yule-Simon given in \eqref{Disc}, that is  we will discuss prior distributions for $\alpha$.

In addition to the parametrization of the Yule--Simon distribution as in \eqref{Disc}, we will also consider the possibility of having the parameter $\alpha$ discrete. This is a common finding in literature, especially when implementations of the model are considered. See, for example, \cite{Simon55} and \cite{Garcia11}. The discretization of $\alpha$ will be discussed in detail in Section \ref{DiscrOb}.


\section{Objective Priors for the Yule-Simon distribution}
\label{ObjectivePriors}
This section is devoted to the derivation of two objective priors for the Yule-Simon distribution: the Jeffreys prior and loss-based prior. The former assumes that parameter space of $\alpha$ is continuous and it is based on the well-known invariance property proposed by \cite{JE61}; the latter assumes the parameter space discrete and is based on \cite{VillaWalker15}. 

\subsubsection*{The Jeffreys Prior}
\label{DiscrJe}
The Jeffreys prior is defined in the following way \citep{JE61}:
\begin{equation}
\pi(\alpha) \propto \sqrt{\mathcal{I}(\alpha)} \notag
\end{equation}
where $\mathcal{I}(\alpha)=\mathbb{E}_{\alpha}\biggl[-\frac{\partial ^{2}\log(f(k;\alpha))}{\partial \alpha^{2}} \biggr]$ is the Fisher Information. In the next Theorem \label{Jef1} (which proof is in the Appendix) an explicit expression of the Jeffreys prior for the Yule-Simon distribution is provided. 
\begin{theorem}\label{Jef1}
Let $f(k;\alpha)$ be the Yule-Simon distribution defined in equation (\ref{Disc}), with $0<\alpha<1$. The Jeffreys prior for $\alpha$ is 
\begin{equation}\label{JefPrior}
\pi(\alpha) \propto q(\alpha) \\
\end{equation}
where 
$$q(\alpha)=\frac{1}{1-\alpha}\sqrt{1-\frac{1}{(2-\alpha)^{2}} \,_{3}F_{2}\biggl(1,\frac{1}{1-\alpha}+1,1; \frac{1}{1-\alpha}+2,\frac{1}{1-\alpha}+2;1\biggr)}.$$
with $_{3}F_{2}$ being the hypergeometric distribution function.
\end{theorem}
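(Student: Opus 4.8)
The plan is to compute the Fisher information $\mathcal{I}(\alpha)$ directly from the definition and then take its square root. Rather than differentiating \eqref{Disc} with respect to $\alpha$, I would first pass to the parameter $\rho = 1/(1-\alpha)$, in which the density reads $f(k;\rho) = \rho\,\mathrm{B}(k,\rho+1)$, so that
\[
\log f(k;\rho) = \log\rho + \log\Gamma(k) + \log\Gamma(\rho+1) - \log\Gamma(k+\rho+1).
\]
Differentiating twice in $\rho$ introduces the digamma and trigamma functions, giving $-\,\partial^2_\rho \log f = \rho^{-2} - \psi'(\rho+1) + \psi'(k+\rho+1)$. Since the Jeffreys prior is invariant under reparametrization, $\mathcal{I}(\alpha) = (\mathrm{d}\rho/\mathrm{d}\alpha)^2\,\mathcal{I}(\rho)$ with $\mathrm{d}\rho/\mathrm{d}\alpha = \rho^2$, so it suffices to evaluate $\mathcal{I}(\rho) = \mathbb{E}_\rho[-\,\partial^2_\rho\log f]$ and substitute back at the end.

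The substantive step is the expectation of the trigamma term. Using the series $\psi'(x) = \sum_{j\ge 0}(x+j)^{-2}$ I would rewrite $\psi'(k+\rho+1) - \psi'(\rho+1) = -\sum_{m=0}^{k-1}(\rho+1+m)^{-2}$, which reduces the information to $\mathcal{I}(\rho) = \rho^{-2} - \mathbb{E}\bigl[\sum_{m=0}^{k-1}(\rho+1+m)^{-2}\bigr]$. Interchanging the two summations (legitimate since every term is nonnegative, by Tonelli), the inner expectation collapses to $\sum_{m\ge 0}(\rho+1+m)^{-2}\,\mathbb{P}(K>m)$. At this point I need the Yule--Simon survival function, which I would derive from the telescoping identity $f(j;\rho) = \rho\,[\mathrm{B}(j,\rho) - \mathrm{B}(j+1,\rho)]$; summing the tail then gives $\mathbb{P}(K>m) = m!\,\Gamma(\rho+1)/\Gamma(m+\rho+1)$.

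It remains to recognize the resulting single series $S := \sum_{m\ge 0}(\rho+1+m)^{-2}\,m!\,\Gamma(\rho+1)/\Gamma(m+\rho+1)$ as a hypergeometric function. Forming the term ratio $t_{m+1}/t_m = (\rho+1+m)(m+1)/(\rho+2+m)^2$ and matching it against the generic ${}_pF_q$ template identifies the upper parameters $1,\rho+1,1$ and the lower parameters $\rho+2,\rho+2$ at unit argument; normalizing by the $m=0$ term $t_0 = (\rho+1)^{-2}$ gives $S = (\rho+1)^{-2}\,{}_3F_2(1,\rho+1,1;\rho+2,\rho+2;1)$. Assembling the pieces yields $\mathcal{I}(\rho) = \rho^{-2} - (\rho+1)^{-2}\,{}_3F_2(\cdots)$; multiplying by the Jacobian factor $\rho^4$, taking the square root, and using $\rho = 1/(1-\alpha)$ together with $\rho/(\rho+1) = 1/(2-\alpha)$ recovers exactly $q(\alpha)$.

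I expect the main obstacle to be precisely this closed-form evaluation of $\mathbb{E}[\psi'(k+\rho+1)]$: it relies both on summing the Yule--Simon tail by telescoping and on correctly reading off the three upper and two lower parameters of the ${}_3F_2$ from the term ratio, after which convergence at the unit argument (ensured by $\sum b - \sum a = \rho+1>0$) should be checked. The remaining ingredients --- the two differentiations and the Jacobian rule for Jeffreys priors --- are routine.
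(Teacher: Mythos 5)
Your proposal is correct and reaches the same final expression, but it is organized along a genuinely cleaner route than the paper's. The paper differentiates $\log f$ directly in $\alpha$, which produces a three-term Fisher information containing both a digamma-difference expectation and a trigamma-difference expectation; it must then evaluate the first expectation $\mathbb{E}_\alpha\bigl[\sum_{j=1}^{k}(\tfrac{1}{1-\alpha}+j)^{-1}\bigr]=1-\alpha$ separately (in effect re-verifying that the score has zero mean) before the $+\tfrac{1}{(1-\alpha)^2}$ and $-\tfrac{1}{(1-\alpha)^2}$ pieces combine. By working in $\rho=1/(1-\alpha)$ and invoking the transformation rule $\mathcal{I}(\alpha)=(\mathrm{d}\rho/\mathrm{d}\alpha)^2\mathcal{I}(\rho)$, you eliminate that first expectation entirely and are left with the single substantive computation, $\mathbb{E}[\psi'(k+\rho+1)]$. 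Your two remaining ingredients also differ in technique while landing on the same intermediate quantities: the paper obtains the tail probability $\mathbb{P}(K\ge j)=\Gamma(j)\Gamma(\rho+1)/\Gamma(\rho+j)$ by expanding the Beta integral and summing a geometric series, whereas your telescoping identity $f(j;\rho)=\rho\,[\mathrm{B}(j,\rho)-\mathrm{B}(j+1,\rho)]$ gives it in one line; and the paper identifies the ${}_3F_2$ by passing through an integral of a ${}_2F_1$ and citing formula 7.512.5 of Gradshteyn--Ryzhik, whereas your term-ratio matching (with the correct convergence check $\sum b-\sum a=\rho+1>0$) is self-contained. I verified the key numerical consistency: your $t_0=(\rho+1)^{-2}$ equals the paper's prefactor $(1-\alpha)^2/(2-\alpha)^2$, and $\rho^4\bigl[\rho^{-2}-(\rho+1)^{-2}\,{}_3F_2\bigr]$ reproduces the paper's $\mathcal{I}(\alpha)$ exactly. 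Both approaches are valid; yours trades the external table lookup for elementary manipulations and avoids one expectation computation, at the cost of having to justify the reparametrization step (which is standard).
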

\noindent The Jeffreys prior stated in Theorem \ref{Jef1} is a proper prior. In fact, let
\begin{equation}\label{JefPriorNorm}
\pi(\alpha) =\frac{q(\alpha)}{K}, \notag \\
\end{equation}
where
\begin{footnotesize}
\begin{equation}\label{NormConst}
K=\int_0^1\frac{1}{1-\alpha}\sqrt{1-\frac{1}{(2-\alpha)^{2}} \,_{3}F_{2}\biggl(1,\frac{1}{1-\alpha}+1,1; \frac{1}{1-\alpha}+2,\frac{1}{1-\alpha}+2;1\biggr)}d\alpha\notag 
\end{equation}
\end{footnotesize}
is the normalizing constant of $\pi(\alpha)$.
\noindent It is not difficult to prove that $$K<\infty.$$ 
Indeed,  
 $$K\leq\int_0^1\sqrt{\frac{3-\alpha}{1-\alpha}}\frac{1}{2-\alpha}d\alpha=\frac{1}{3}\pi-\ln(2-\sqrt{3})<\infty. $$
The result above follows from the following inequality
$$_{3}F_{2}\biggl(1,\frac{1}{1-\alpha}+1,1; \frac{1}{1-\alpha}+2,\frac{1}{1-\alpha}+2;1\biggr)\geq 1.$$
The properness of the prior in (\ref{JefPrior}) ensures the properness of the yielded posterior distribution for $\alpha$, as such suitable for inference.

\subsubsection*{The Loss-based Prior}
\label{DiscrOb}
\cite{VillaWalker15} introduced a method for specifying and objective prior for discrete parameters. The idea is to assign a \emph{worth} to each parameter value by objectively measuring what is lost if the value is removed, and it is the true one. The loss is evaluated by applying the well known result in \cite{Berk66} stating that, if a model is misspecified, the posterior distribution asymptotically accumulates on the model which is the nearest to the true one, in terms of the Kullback--Leibler divergence.

Given that the parameter $\alpha\in(0,1)$ of the Yule--Simon is in principle continuous, the above method can not be applied. However, the boundedness of the interval allows for an easy discretization, directly we can consider the set
$$\mathbb{D}_{M}=\left\lbrace \alpha=\frac{i}{M}: i=1,\dots,M-1\right\rbrace.$$
Therefore, the \emph{worth} of the parameter value $\alpha$ is represented by the Kullback--Leibler divergence $D_{KL}(f(k|\alpha)\|f(k|\alpha^\prime))$, where $\alpha^\prime\neq\alpha$ is the parameter value that minimizes the divergence. To link the \emph{worth} of a parameter value to the prior mass, \cite{VillaWalker15} use the self-information loss function. This particular type of loss function measures the loss in information contained in a probability statement \citep{MerFed98}. As we now have, for each value of $\alpha$, the loss in information measured in two different ways, we simply equate them obtaining the loss-based prior of \cite{VillaWalker15}:
\begin{equation}
\pi(\alpha) \propto \exp{\biggl\{\min_{\alpha' \ne \alpha} D_{KL}(f(k|\alpha)\|f(k|\alpha'))\biggr\}} -1 \qquad\alpha,\alpha^\prime\in\mathbb{D}_{M},\label{OBB}
\end{equation}
where
\begin{align*}
D_{KL}(f(k|\alpha)\|f(k|\alpha'))
&=\log{\left(\frac{1-\alpha'}{1-\alpha}\right)}+\mathbb{E}_{\alpha}\left\{\log{\left[B\left(k;\frac{1}{1-\alpha}+1\right)\right]}\right\}\\
&\hspace{3cm}-\mathbb{E}_{\alpha}\left\{\log{\left[B\left(k;\frac{1}{1-\alpha'}+1\right)\right]}\right\}.
\end{align*}
As the discretized parameter space is finite, no matter what value of $M$ one chooses, the prior \eqref{OBB} is proper, hence, the yielded posterior will be proper as well.

An important aspect is that the value $\alpha^\prime$ minimizing the Kullback--Leibler divergence can not be analytically determined, and the prior has to be computationally derived. However, even for large values of $M$, the computational cost is trifling compared to the whole Monte Carlo procedure necessary to simulate from the posterior distribution. \\

\begin{figure}[h!]
\centering
  {\includegraphics[width=7.5cm]{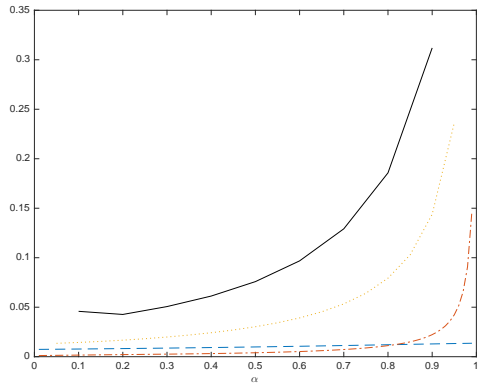}}
  \caption{Prior distribution for $\alpha$ obtained by applying Jeffreys rule (dashed line), the loss-based method with $M=10$ (continuous line), with $M=20$ (dotted line) and with $M=100$ (dash-dotted line).}
\label{JJvsKL}
\end{figure}
To have a feeling of the prior distributions derived above, we have plotted them in Figure \ref{JJvsKL}. The behaviour of the  priors is similar, in the sense that they tend to increase as $\alpha$ increases and, for increasing values of $M$, the two distributions seem to converge. However, we note that the Jeffreys prior is flatter than the loss-based priors for large values of the parameter, i.e. for $\alpha$ approximately greater than $0.8$.

\section{Simulation Study}
\label{Simu}
The objective priors defined in Section \ref{ObjectivePriors} are automatically derived by taking into consideration properties intrinsic to the Yule--Simon distribution. In other words, they do not depend on experts knowledge or previous observations. It is therefore necessary, in order to validate them, to assess the goodness of the priors by making inference on simulated data. This section is dedicated in performing a simulation study on the parameter $\alpha$ using observations obtained from fully known distributions.

We have considered different sample sizes, $n=30$, $n=100$ and $n=500$, to analyse the behaviour of the prior distributions under different level of information coming from the data. Here we show the results for $n=100$ only, as the sole differences in using $n=30$ and $n=500$ sample sizes are limited to the precision of the inferential results: relatively low for $n=30$ and relatively high for $n=500$, as one would expect. Besides that, the differences in the performance of the two priors noted for $n=100$ remain for the other sample sizes. As the loss-based prior depends on the discretization of the parameter space, for illustration purposes, we have considered $M=10$ and $M=20$, that is $\alpha\in\{0.1,0.2,\ldots,0.9\}$ and $\alpha\in\{0.05,0.10,\ldots,0.95\}$, respectively.

Both the Jeffreys prior and the loss-based prior yield posterior distributions for $\alpha$ which are not analytically tractable, hence, it is necessary to use Monte Carlo methods. We have generated $100$ samples from a Yule--Simon distribution with the parameter $\alpha$ set to every value in the parameter space, 9 for $M=10$ and 19 for $M=20$. For each sample we have simulated from the posterior distribution of $\alpha$, under both priors, by running $10,000$ iterations, with a burn-in period of $2,000$ iterations.

To evaluate the priors we have considered two frequentist measures. The first is the frequentist coverage of the $95\%$ credible interval. That is, for each posterior, we compute the interval between the 0.025 and 0.975 quantiles and see if the true value of $\alpha$ is included in it. Over repeated samples, one would expect a proportion of about $95\%$ of the posterior intervals to contain the true parameter value. The second frequentist measure gives an idea of the precision of the inferential process, and it is represented by the square root of the mean squared error (MSE) from the mean, relative to the parameter value: $\sqrt{\mbox{MSE}(\alpha)}/\alpha$. We have considered the MSE from the median as well but, due to the approximate symmetry of the posterior, the results are very similar to the MSE from the mean.
\begin{figure}[h!]
\centering
    \subfigure[]
   {\includegraphics[width=6.75cm]{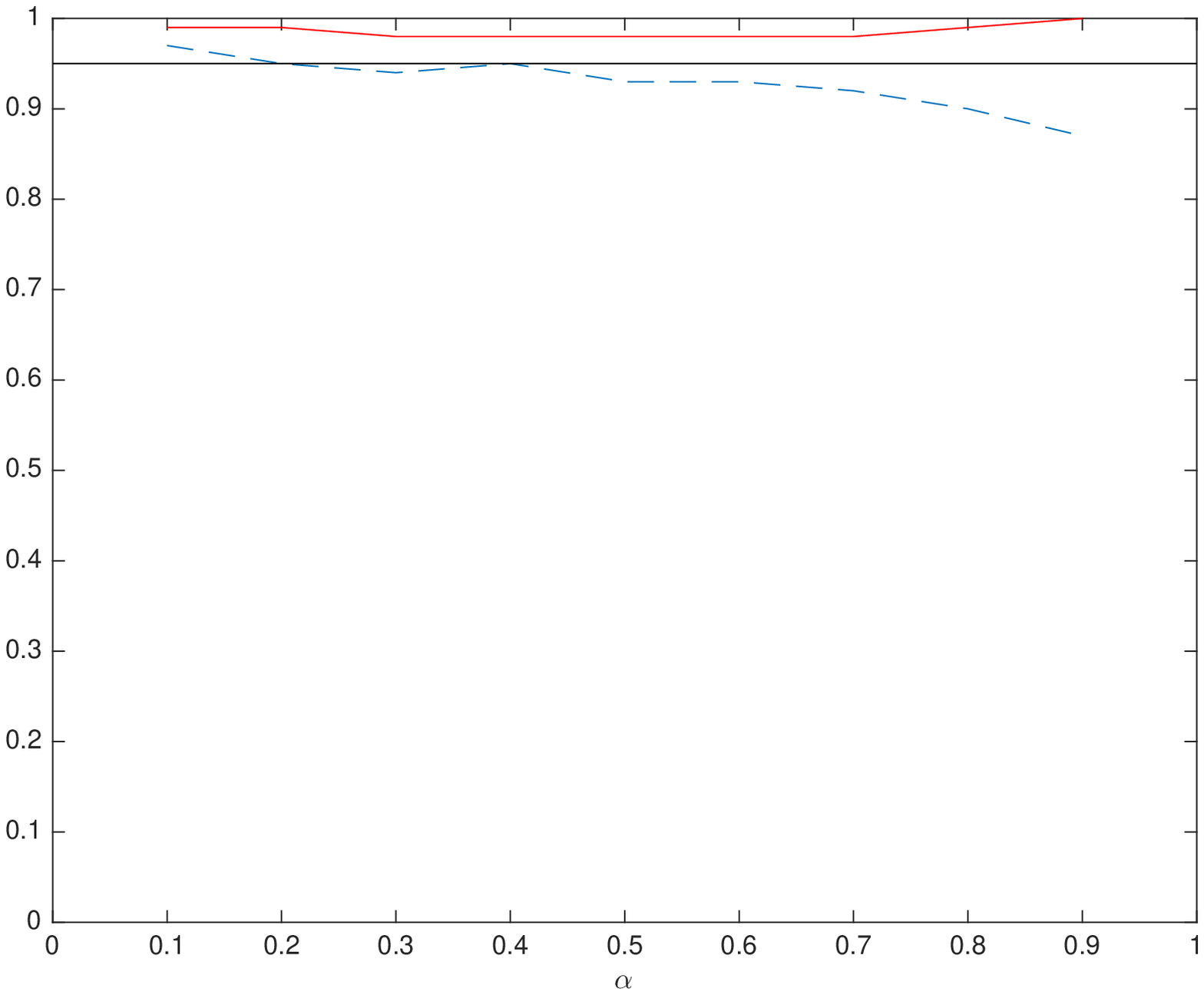}}
	\subfigure[]
 	{\includegraphics[width=6.75cm]{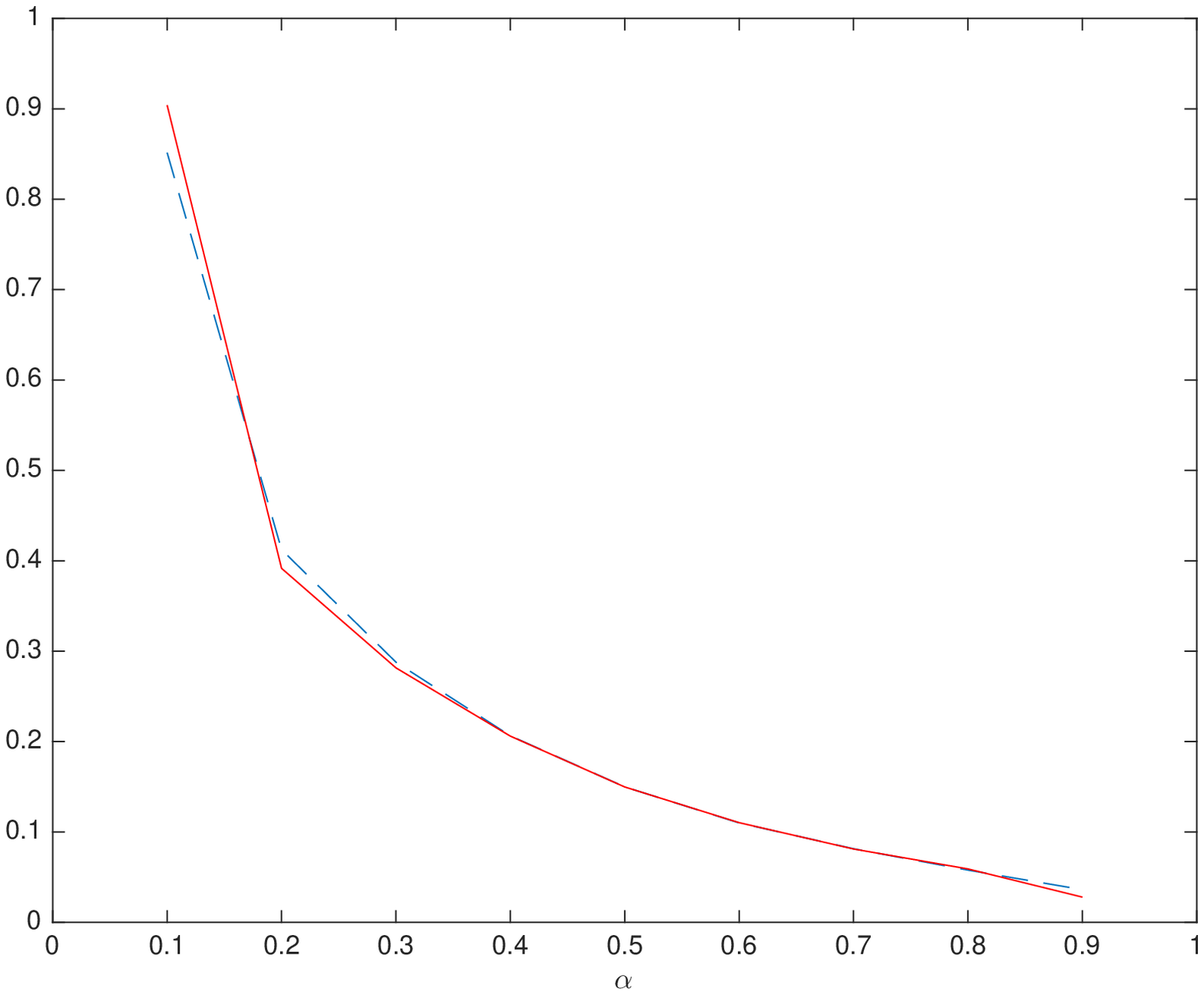}}
  \caption{Frequentist properties of the Jeffreys prior (dashed line) and the loss-based prior (continuous line) for $n=100$. The loss-prior is considered on the discretized parameter space with $M=10$. The left plot shows the posterior frequentist coverage of the $95\%$ credible interval, and the right plot represents the square root of the MSE from the mean of the posterior, relative to $\alpha$.}
\label{alpha01}
\end{figure}
Figure \ref{alpha01} details the results for the simulations with $n=100$ and a parameter space for $\alpha$ discretized with increments of $0.1$, that is $\alpha\in\{0.1,0.2,\ldots,0.9\}$. If we compare the coverage, we note that the loss-based prior tends to over-cover the credible interval, while the Jeffreys prior, although shows a better coverage for values of $\alpha<0.5$, deteriorates in performance as the parameter tends to the upper bound of its space. Looking at the MSE, both priors appear to have very similar performance, and the (relative) error tends to decrease and $\alpha$ increases.
\begin{figure}[h!]
\centering
    \subfigure[]
   {\includegraphics[width=6.75cm]{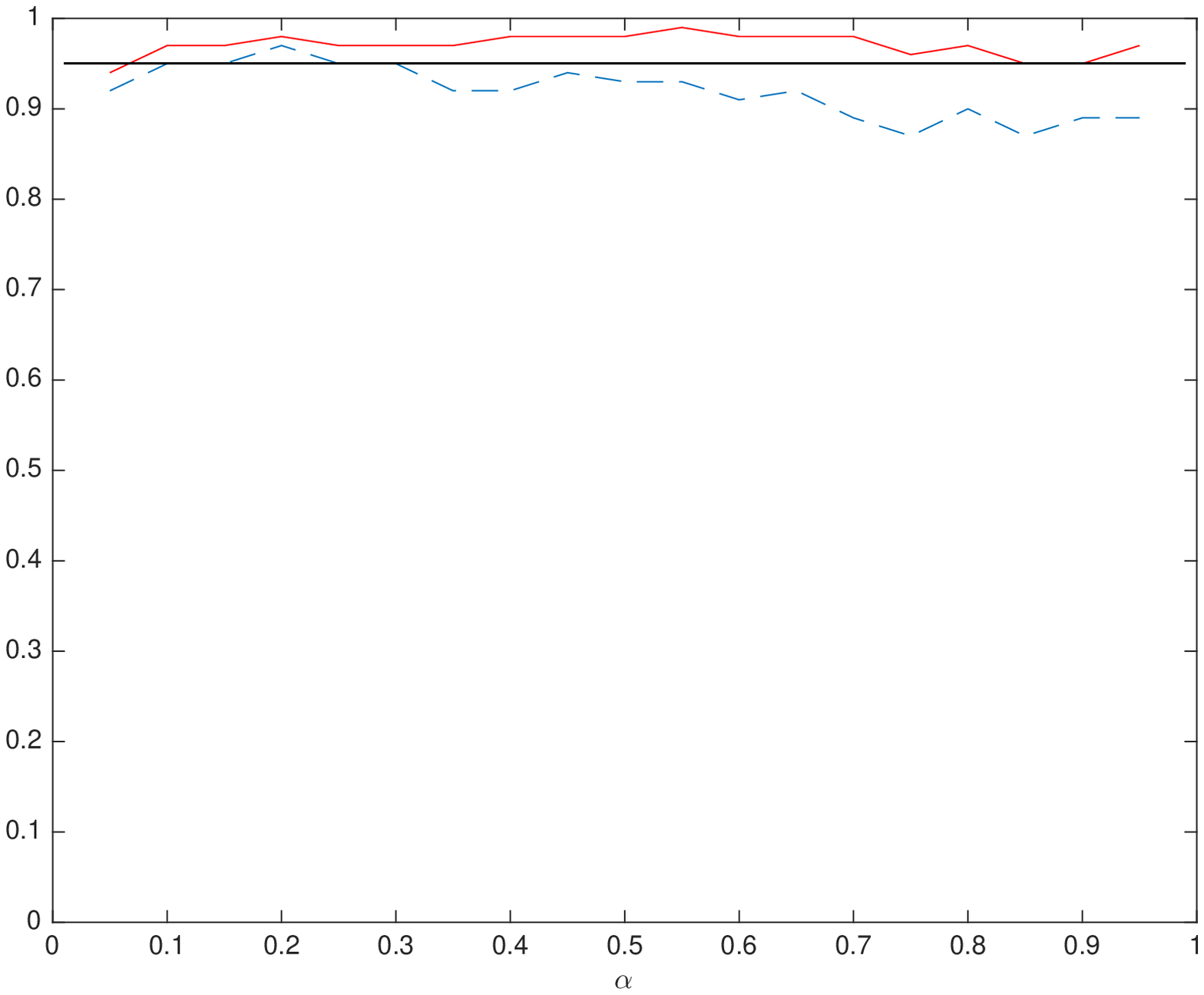}}
	\subfigure[]
	{\includegraphics[width=6.75cm]{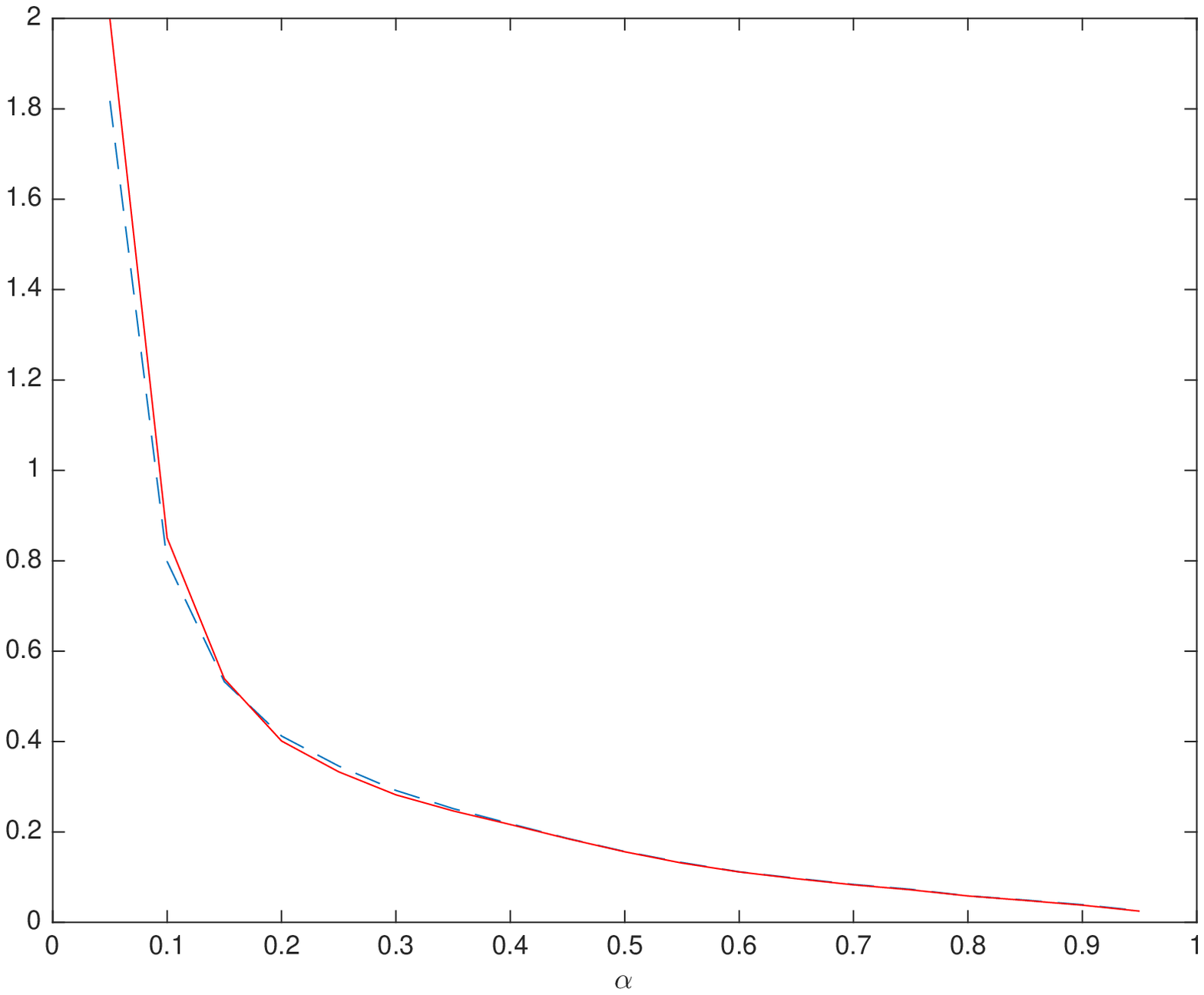}}
  \caption{Frequentist properties of the Jeffreys prior (dashed line) and the loss-based prior (continuous line) for $n=100$. The loss-prior is considered on the discretized parameter space with $M=20$. The left plot shows the posterior frequentist coverage of the $95\%$ credible interval, and the right plot represents the square root of the MSE from the mean of the posterior, relative to $\alpha$.}
\label{alpha005}
\end{figure}
In Figure \ref{alpha005} we have compared the frequentist performance of the Jeffreys prior with the loss-based prior defined over a more densely discretized parameter space, i.e. $\alpha=\{0.05,0.10,\ldots,0.95\}$. We note a smoother behaviour of the priors compared to Figure \ref{alpha01}, which is obviously due to the denser characterization considered. The coverage still reveals a tendency of the loss-based prior to over-cover, although less pronounced than the previous case. Jeffreys prior does not present any significant difference from the previous case, as one would expect. For what it concerns the MSE, the differences between the two priors are negligible, and the only aspect we note, as mentioned above, is a smoother decrease of the error as the parameter increases.

We look more into the details of the objective approach by analysing two i.i.d. samples. In particular, we consider a random sample of size $n=100$ from a Yule--Simon distribution with $\alpha=0.40$ and a sample, of the same size, from a Yule--Simon with $\alpha=0.68$.

In both cases, we have sampled from the posterior distribution via Monte Carlo methods with $10,000$ iterations and a burn-in period of $2,000$ iterations.
\begin{figure}[h!]
\centering
	\subfigure[]
	{\includegraphics[width=6.75cm]{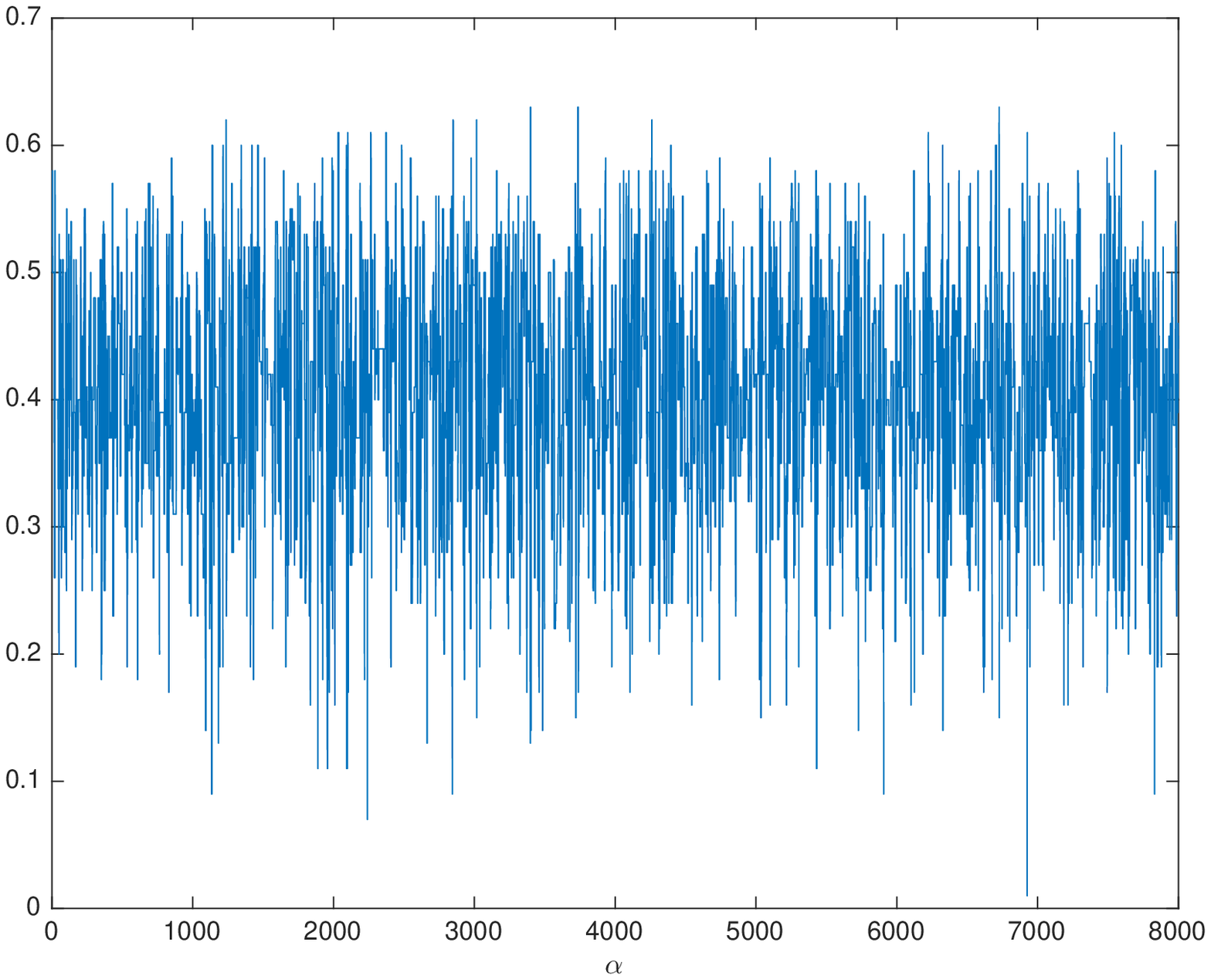}}
	\subfigure[]
	{\includegraphics[width=6.75cm]{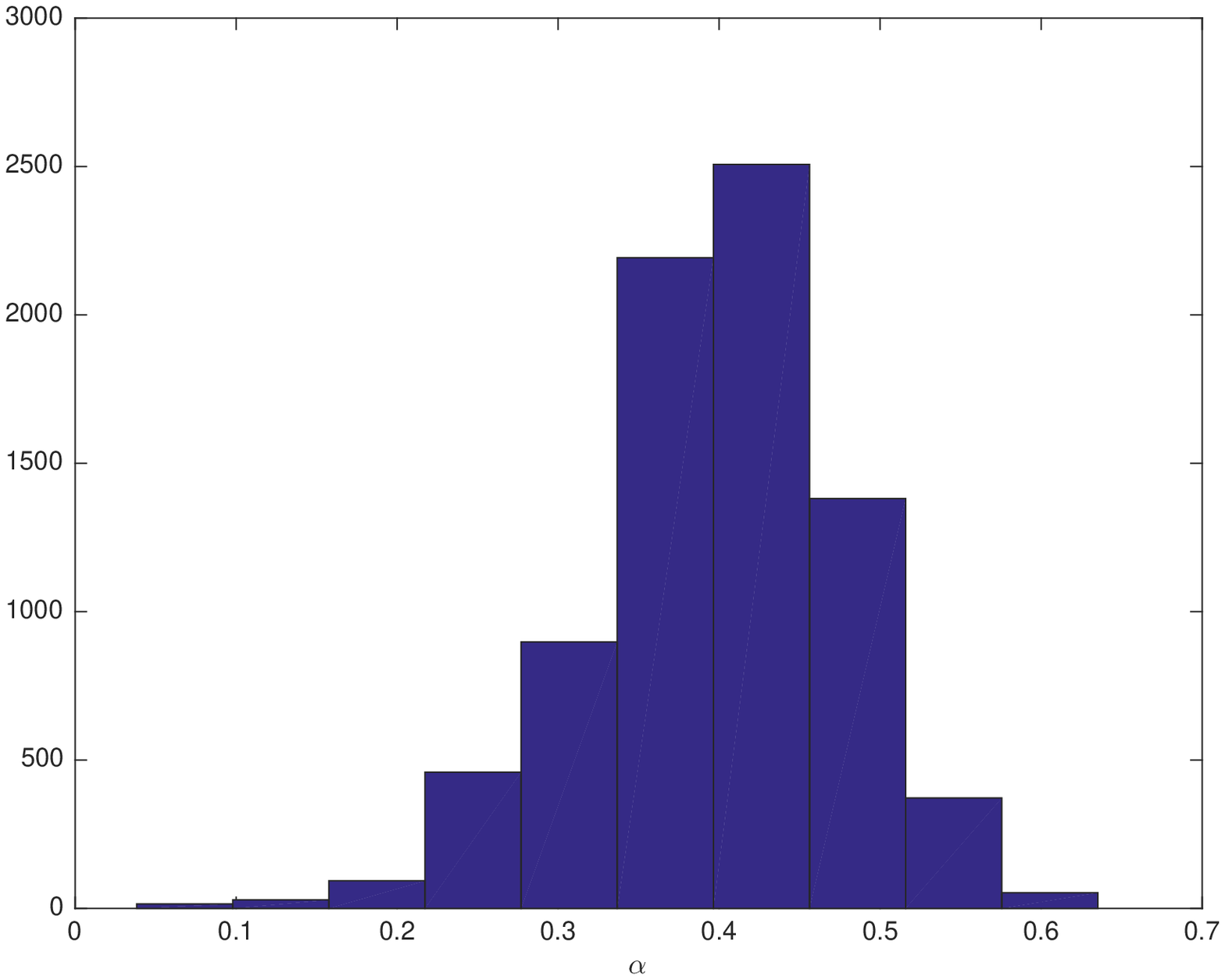}}
	\subfigure[]
    {\includegraphics[width=6.75cm]{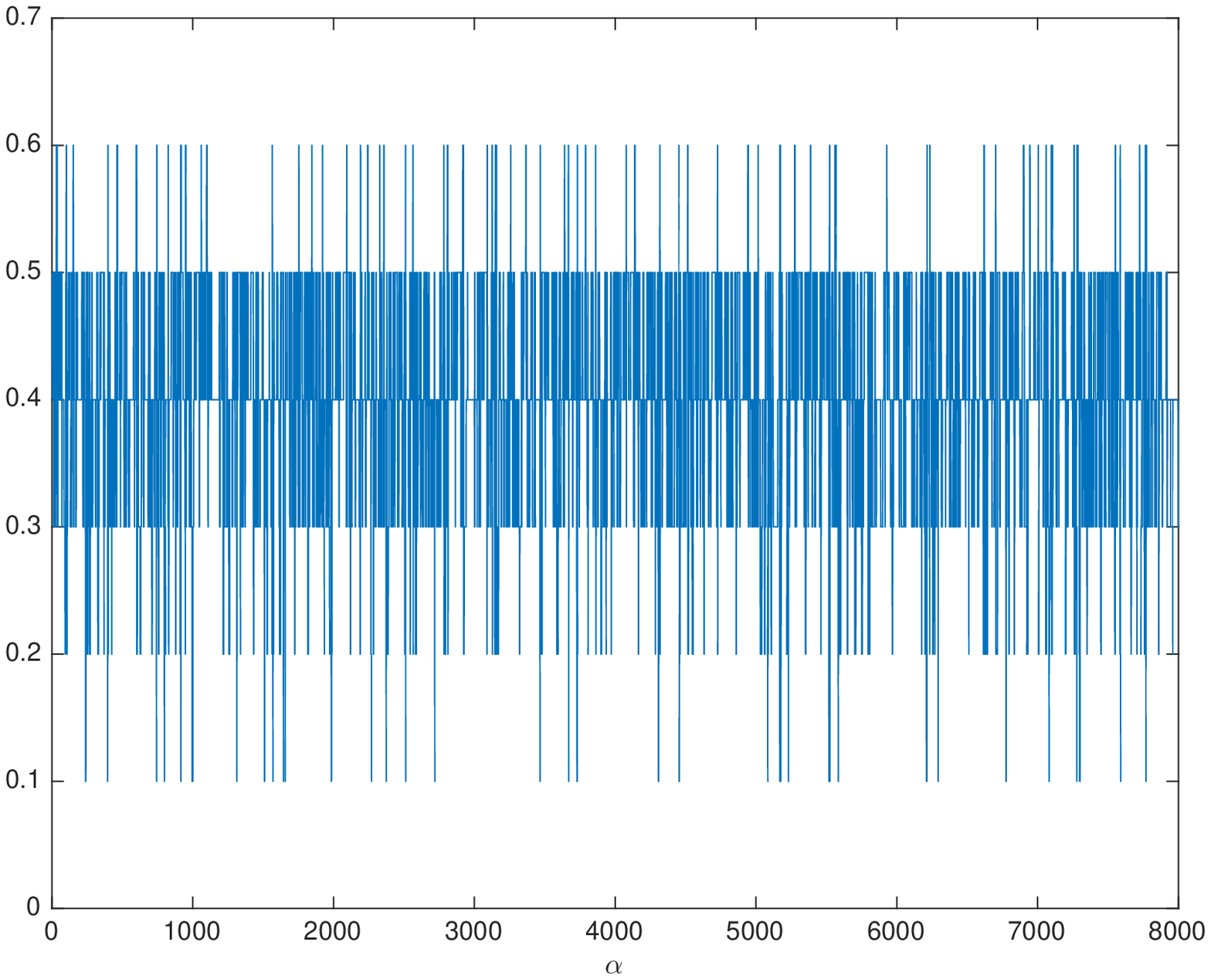}}
    \subfigure[]
    {\includegraphics[width=6.75cm]{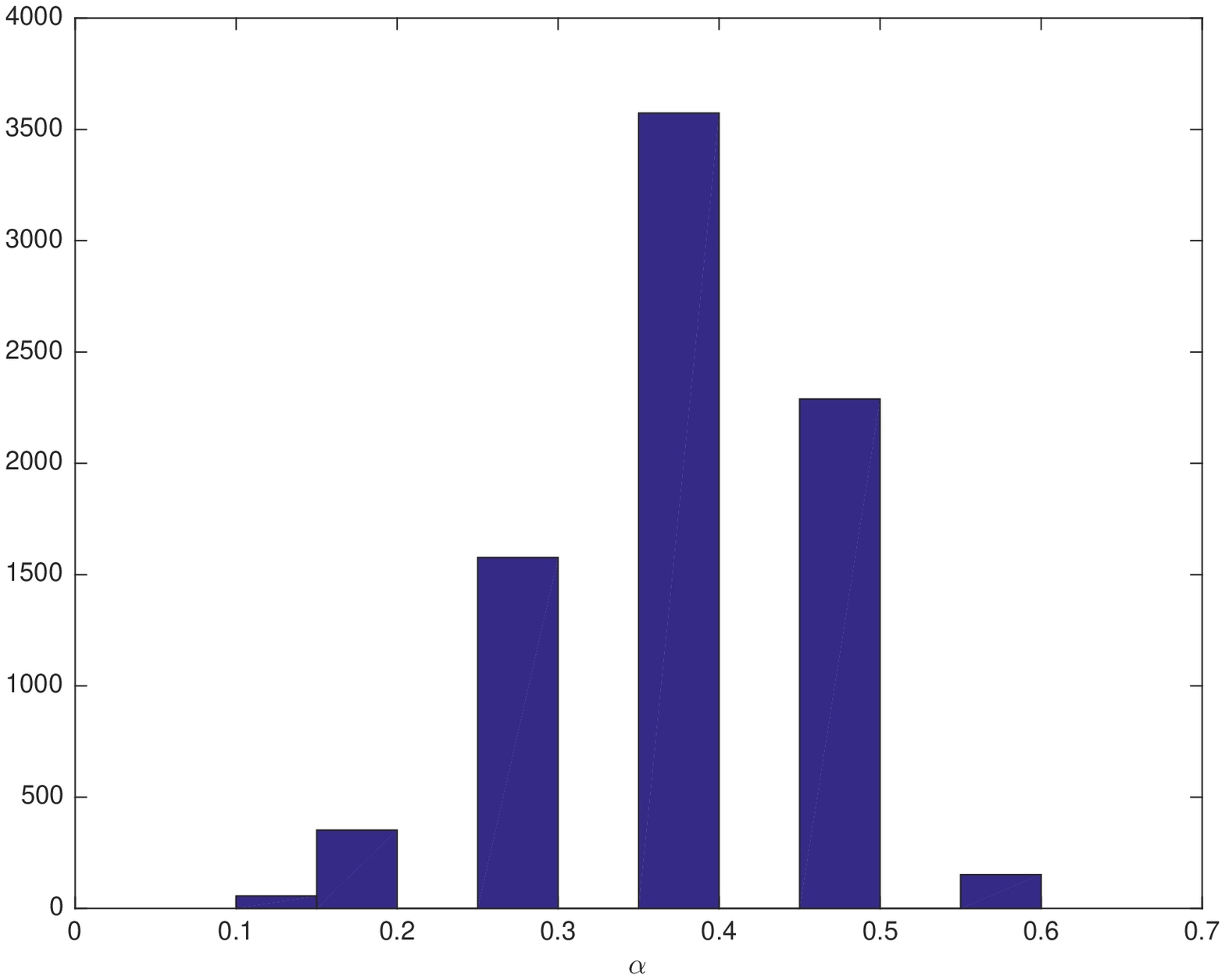}}
    \subfigure[]
	{\includegraphics[width=6.75cm]{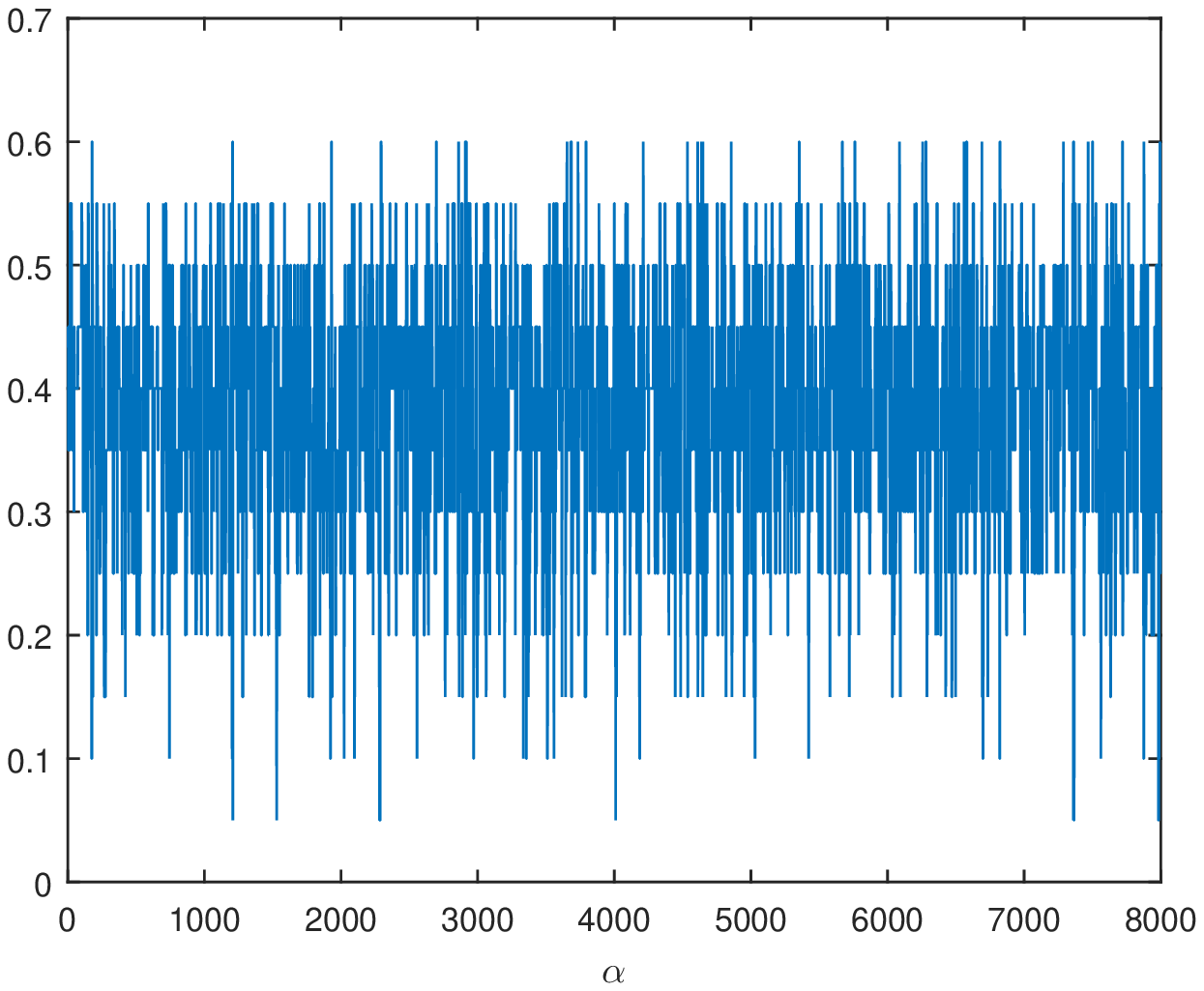}}
    \subfigure[]
 	{\includegraphics[width=6.75cm]{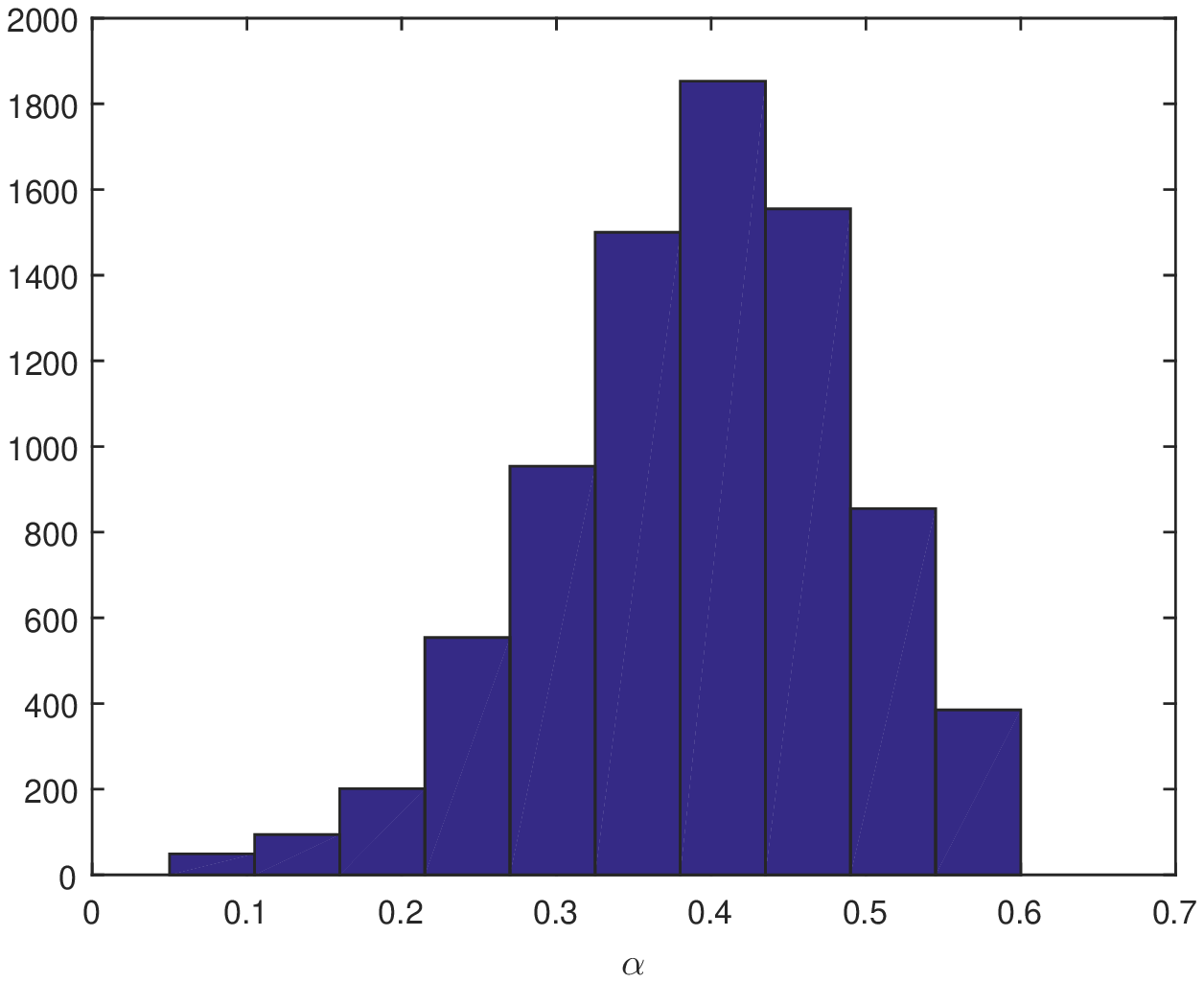}}
   	\caption{Posterior samples (left) and histograms (right) of the analysis of an i.i.d. sample of size $n=100$ from a Yule--Simon distribution with $\alpha=0.40$. From top to bottom, we have Jeffreys prior, loss-based prior with $M=10$ and loss-based prior with $M=20$.}
\label{PosA040}
\end{figure}
Figure \ref{PosA040} shows the posterior samples and posterior histograms derived by applying the Jeffreys prior and the loss-based prior with two different discretizations, that is $M=10$ and $M=20$. The summary statistics of the three posteriors are reported in Table \ref{T40}, where we have the mean, the median, and the $95\%$ credible interval.
\begin{table}[h!]
\centering
\begin{tabular}{cccc}
\hline
Prior & Mean & Median & $95\%$  C.I. \\
\hline
Jeffreys & 0.40& 0.41 &(0.23,0.53) \\
Loss-based $(M=10)$ & 0.40 & 0.4 & (0.2,0.5) \\
Loss-based $(M=20)$ & 0.40 & 0.41 & (0.22,0.56) \\
\hline
\end{tabular}
\caption{Summary statistics of the posterior distributions for the parameter $\alpha$ of the simulated data from a Yule-Simon distribution with $\alpha=0.40$.}
\label{T40}
\end{table}
By comparing the mean of the posterior distributions, we see that they are all centered around the true parameter value.  The credible interval yielded by the loss-based priors with the most dense discretization ($M=20$) is larger than the other two intervals. However, the difference is very small and we can conclude that the three prior distributions result in posteriors which carry the same uncertainty. In other words, the three objective priors perform in the same way.
\begin{figure}[h!]
\centering
	\subfigure[]
   {\includegraphics[width=6.75cm]{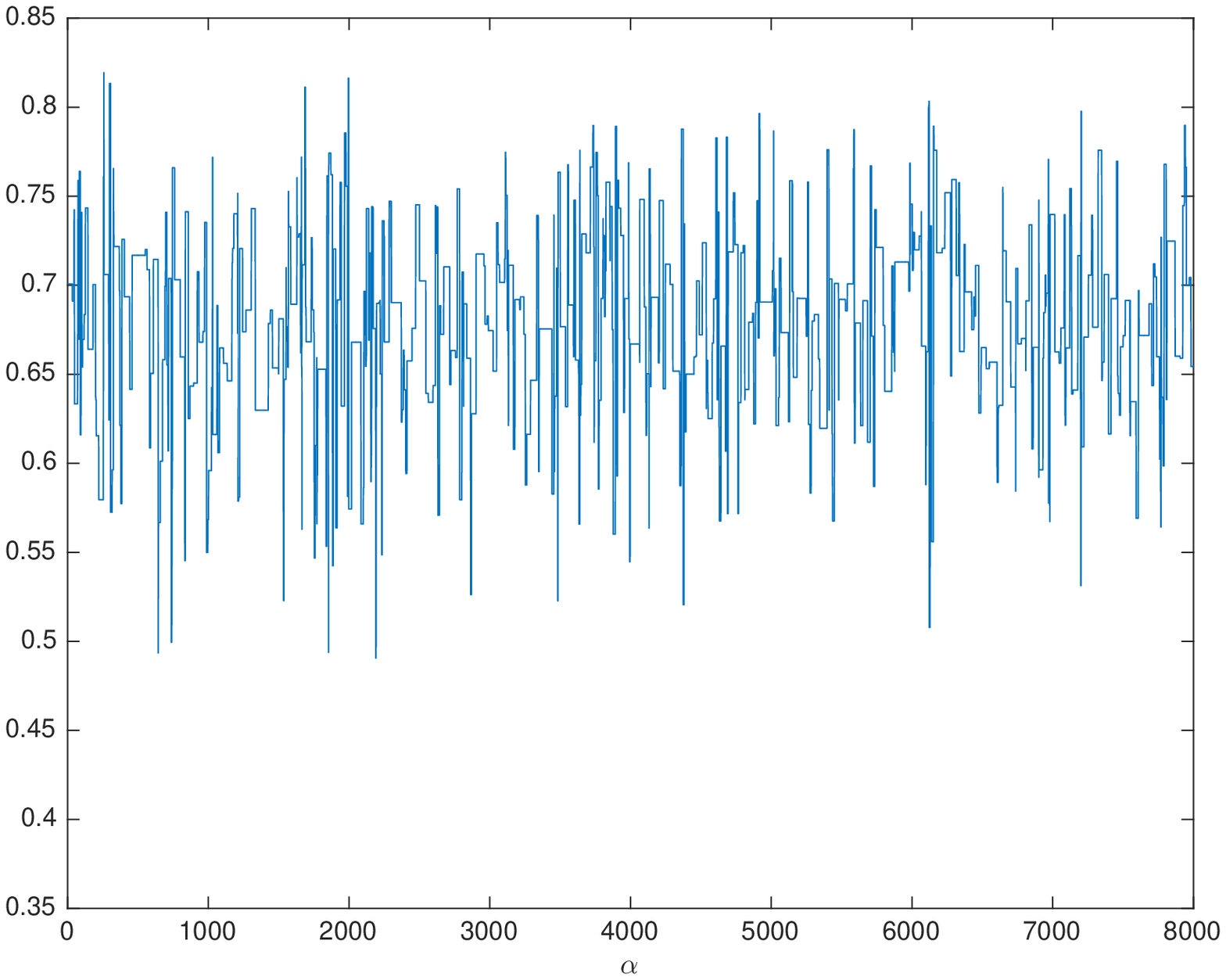}}
	\subfigure[]
  	{\includegraphics[width=6.75cm]{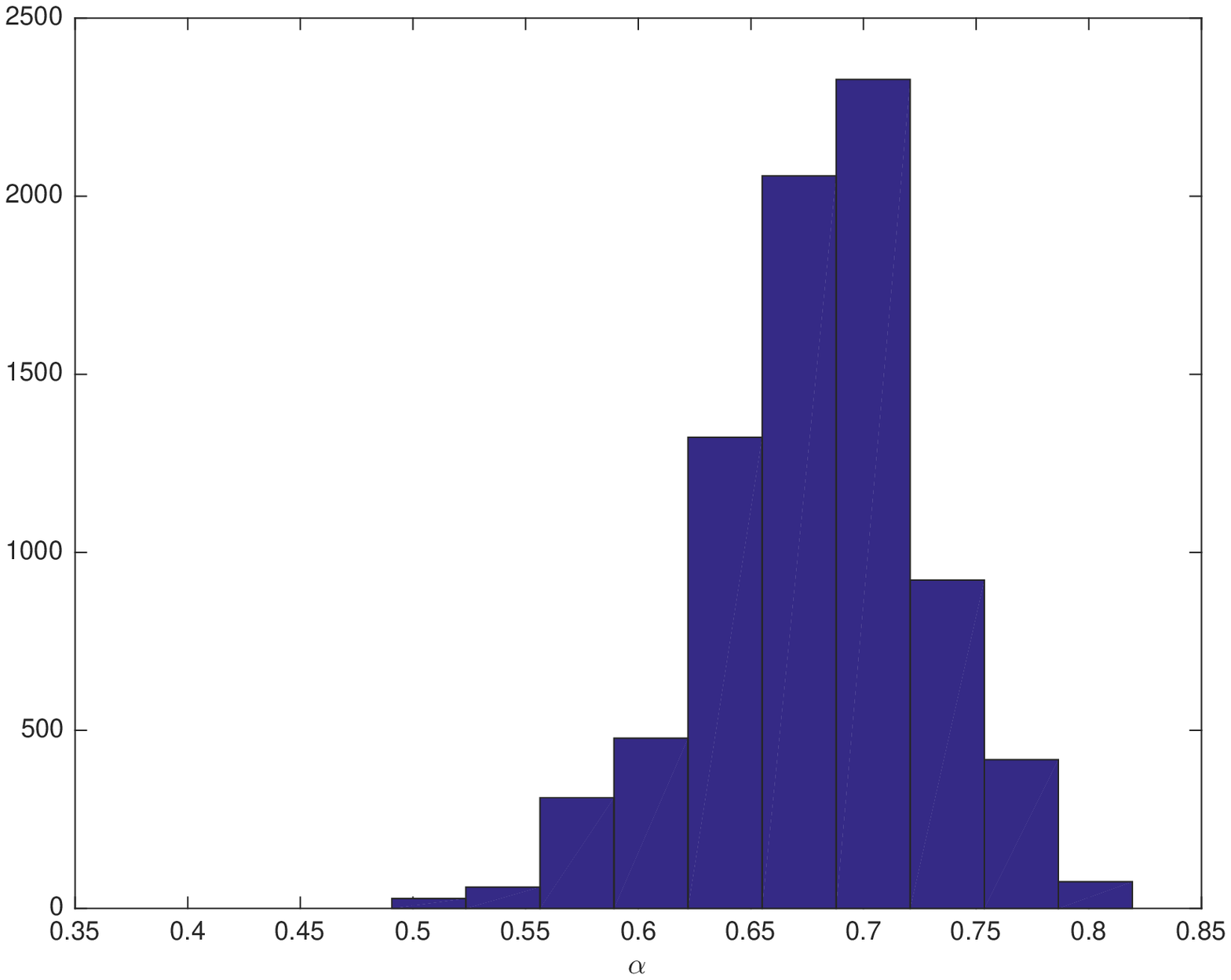}}
  	\subfigure[]
   {\includegraphics[width=6.75cm]{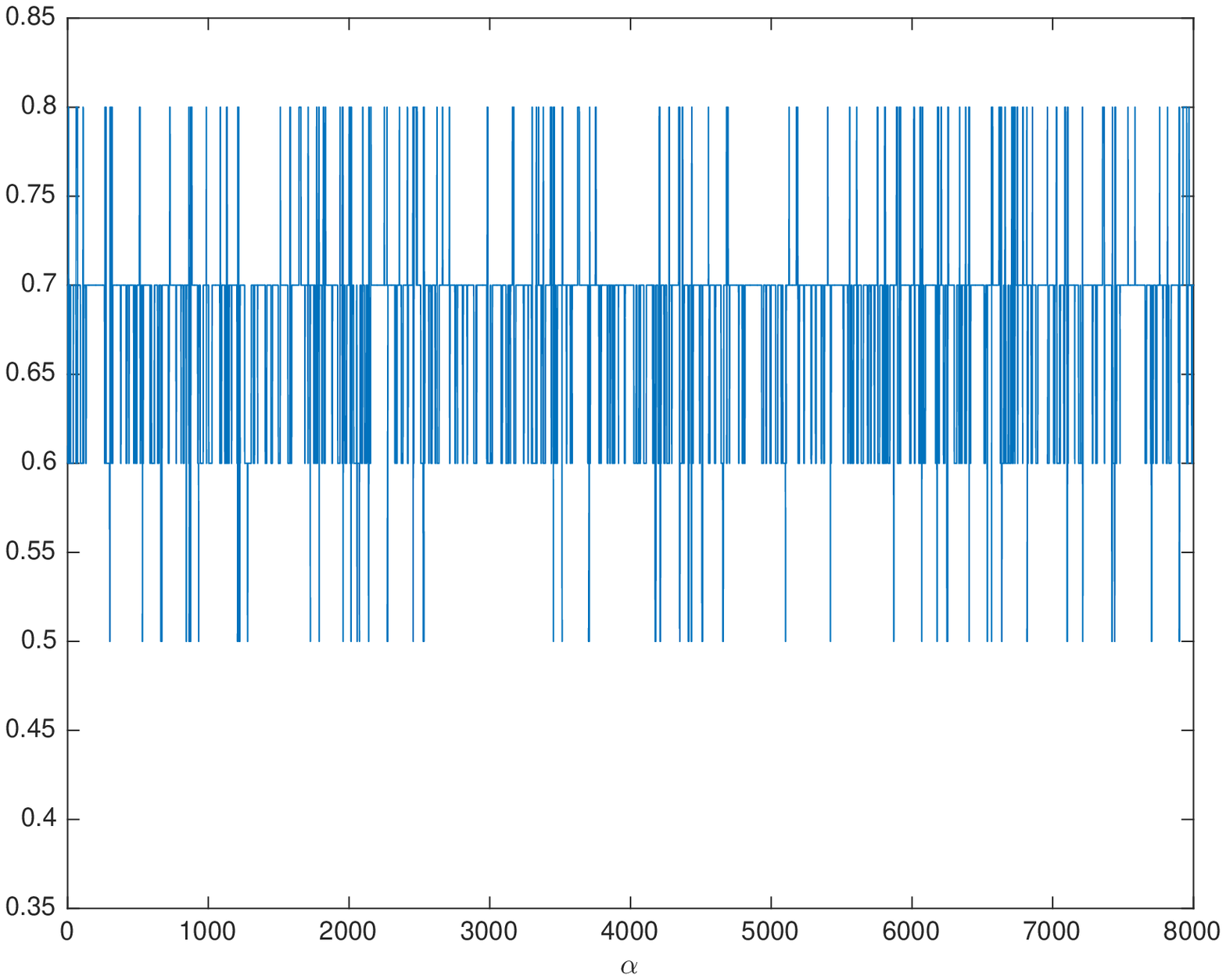}}
   \subfigure[]
   {\includegraphics[width=6.75cm]{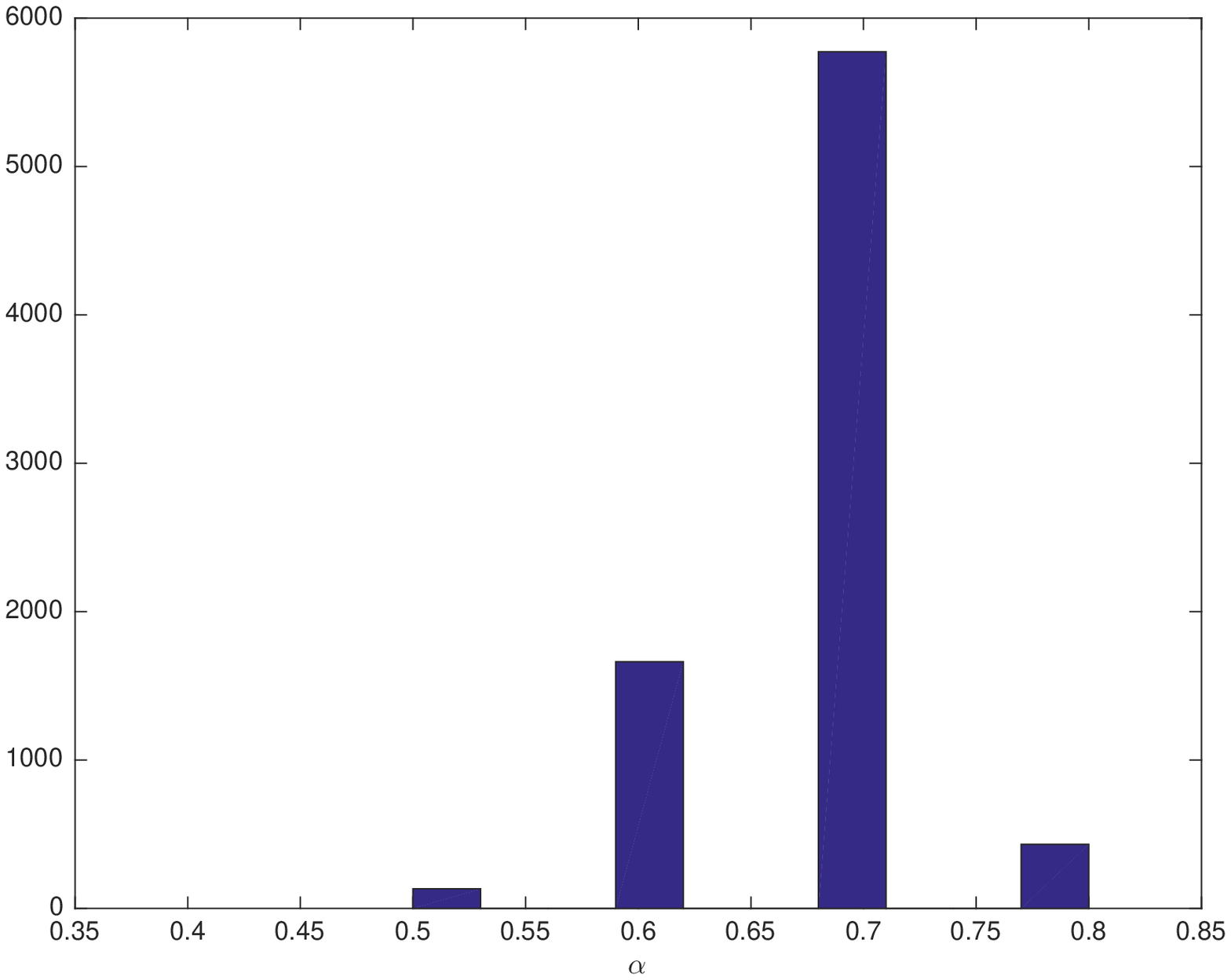}}
   \subfigure[]
   {\includegraphics[width=6.75cm]{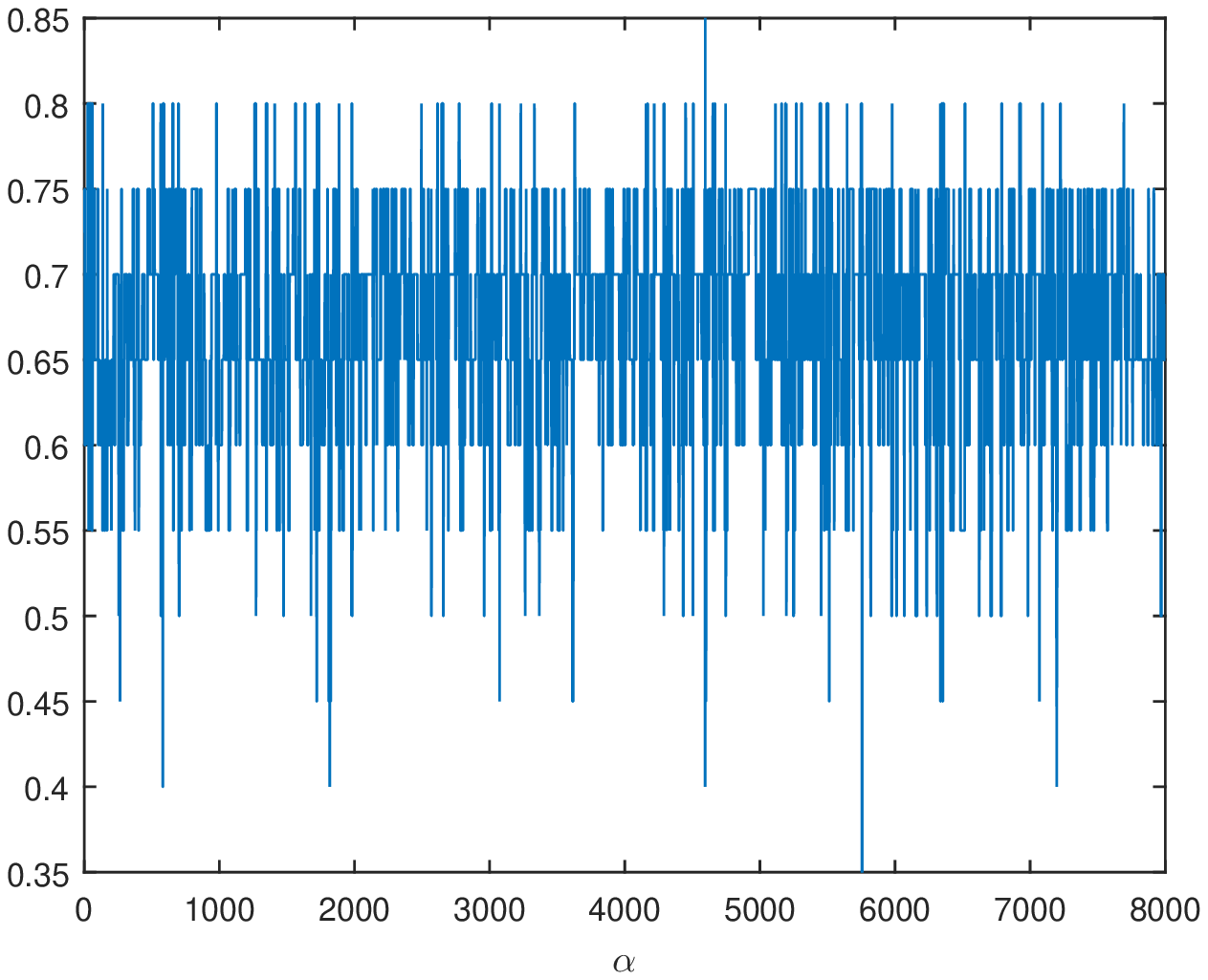}}
    \subfigure[]
   {\includegraphics[width=6.75cm]{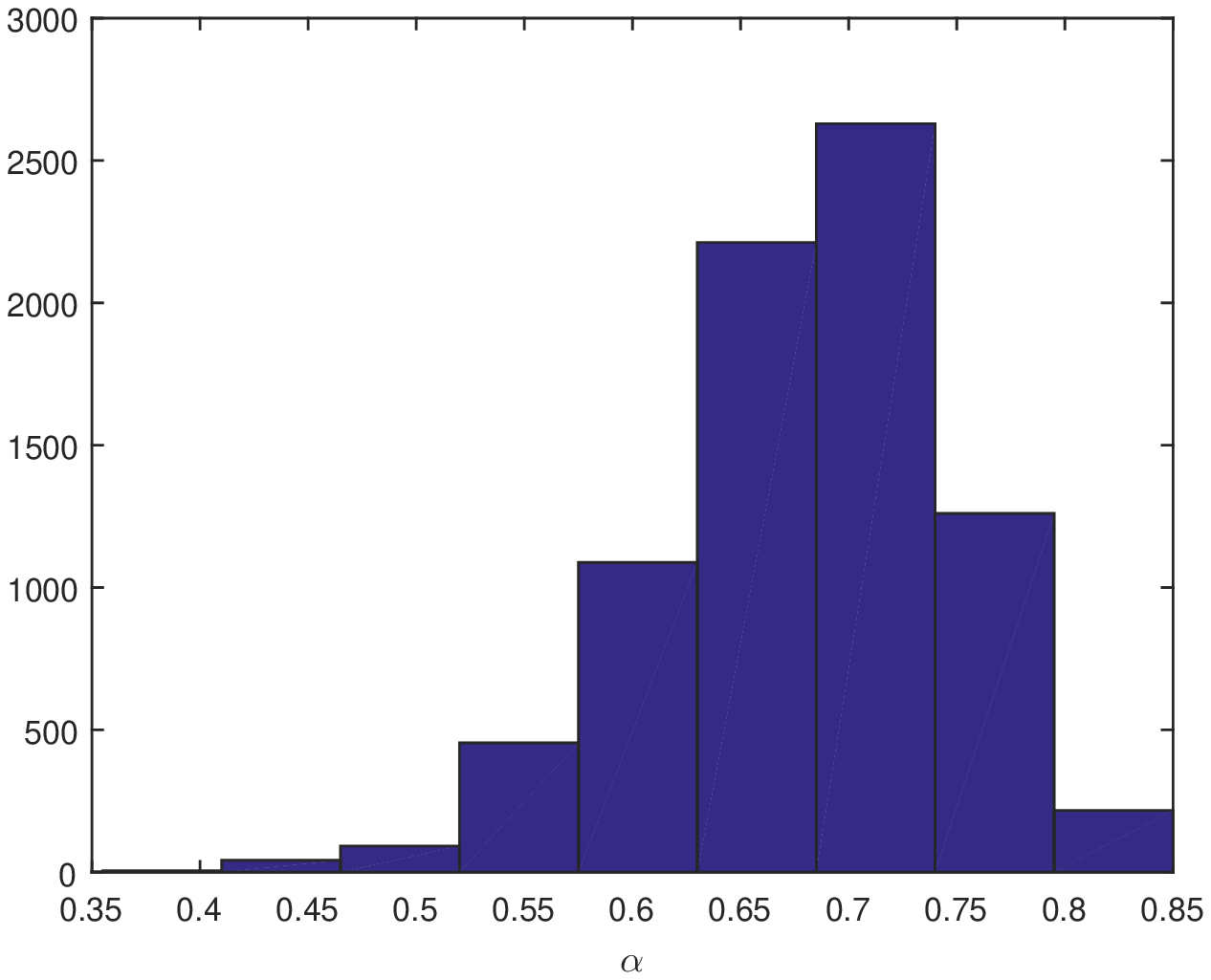}}
\caption{Posterior samples (left) and histograms (right) of the analysis of an i.i.d. sample of size $n=100$ from a Yule--Simon distribution with $\alpha=0.68$. From top to bottom, we have Jeffreys prior, loss-based prior with $M=10$ and loss-based prior with $M=20$.}
\label{PosA068}
\end{figure}

\begin{table}
\centering
\begin{tabular}{cccc}
\hline
Prior & Mean & Median & $95\%$  C. I. \\
\hline
Jeffreys & 0.68 & 0.68 & (0.57,0.77) \\
Loss-based ($M=10)$ & 0.68 & 0.7 & (0.6,0.8) \\
Loss-based ($M=20)$ & 0.68 & 0.68 & (0.55,0.79) \\
\hline
\end{tabular}
\caption{Summary statistics of the posterior distributions for the parameter $\alpha$ of the simulated data from a Yule-Simon distribution with $\alpha=0.68$.}
\label{T68}
\end{table}
Similar considerations can be made for the case where we have sampled $n=100$ observations from a Yule--Simon distribution with $\alpha=0.68$. By inspecting Figure \ref{PosA068} and Table \ref{T68}, we note a very similar behaviour of the three priors, in the sense that the posterior distributions are still centered around the true value of $\alpha$ and that the credible intervals do not present important differences. Note that the choice of a true parameter value which would have not been included in any of the two discretized sample spaces, upon which the loss-prior is based, allows to show that the inferential process appears to be not affected by the discretization, hence motivating it.

To conclude, the simulation study shows no tangible differences in the performance of the prior distributions, in the spirit of objective Bayesian analysis.

\section{Real Data Application}
\label{RealData}
To illustrate the proposed priors, both the Jeffreys and the loss-based prior for the Yule-Simon distribution, we analyze three datasets. The first dataset concerns daily increments of four popular social networks stock indexes in the  US market, the second contains the frequencies of surnames observed in the $1990$ US Census, and the last dataset consists of 'number one' hits in the US  music industry.


\subsection{Social network stock indexes}
\label{Stock}
We analyze different data in the social media marketing, in particular we focus on Facebook, Twitter, Linkedin and Google. These four major companies are the most powerful social networks in the world and are listed in the Wall Street exchange market (\href{http://finance.yahoo.com}{http://finance.yahoo.com}). We analyze the daily increments for the stocks and, in particular, we consider the adjusted closing price from the $1^{st}$ of October 2014 to the $11^{th}$ of March 2016, for a total of $n=365$ observations. The daily increments are obtained by applying $z_{t}=\left|r_{t}/r_{t-1}-1\right|\cdot 100$, for $t=2,\dots, 365$, where $r_{t}$ is the adjusted closing price for the index at day $t$, and we built our frequency on it. These are shown in Figure \ref{DayliIncre}, while Figure \ref{DicretHists} shows the histogram of the frequencies of the discretized data. The discretization has been done by counting the number of times a daily return took a value truncated at the second decimal digit. For example, if two observed daily returns are $1.2494$ and $1.2573$, they were both considered as two occurrences of the same value.
\begin{figure}[h!]
\centering
\subfigure[Facebook]
  	{\includegraphics[width=6.75cm]{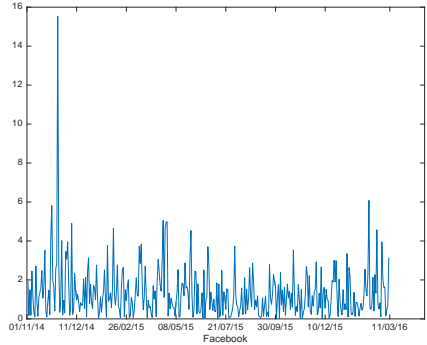}}
   \subfigure[Google]
   {\includegraphics[width=6.75cm]{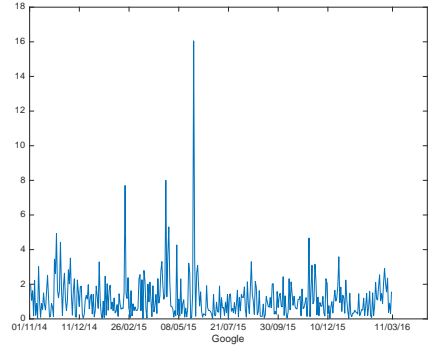}}
   \subfigure[Linkedin]
   {\includegraphics[width=6.75cm]{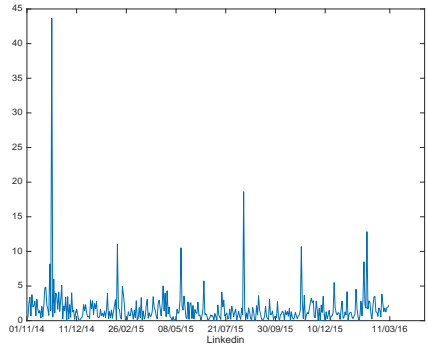}}
   \subfigure[Twitter]
   {\includegraphics[width=6.75cm]{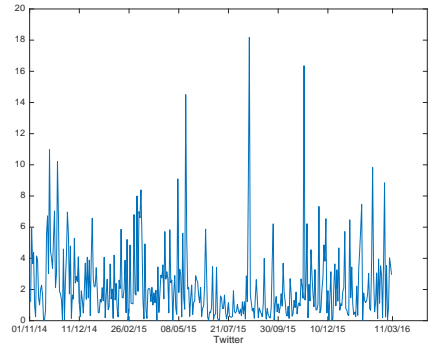}}
   \caption{Daily increments for Facebook, Google, Linkedin and Twitter from the $1^{st}$ of October 2014 to the $11^{th}$ of March 2016.}
   \label{DayliIncre}
   \end{figure}
By inspecting the histograms in Figure \ref{DicretHists} is seems that the (transformed) Yule--Simon distribution might be a suitable statistical model to represent the data. We apply the Bayesian framework and obtain the posterior distribution for the parameter of interest as
$$\pi(\alpha|\textbf{k}) \propto L(\textbf{k}|\alpha)\pi(\alpha),$$
where $\textbf{k}=(k_1,\dots,k_n)$ represents the set of observations, i.e. the frequencies of the discretized daily returns, $L(\textbf{k}|\alpha)$ the likelihood function and $\pi(\alpha)$ the prior distribution which, in turn, has the form of the Jeffreys prior in \eqref{JefPrior} or the loss-based prior \eqref{OBB}.
\begin{figure}[h!]
\centering
\subfigure[Facebook]
  	{\includegraphics[width=6.75cm]{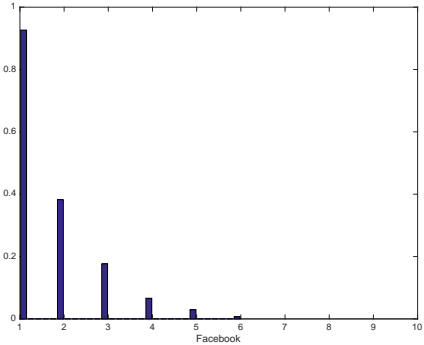}}
   \subfigure[Google]
   {\includegraphics[width=6.75cm]{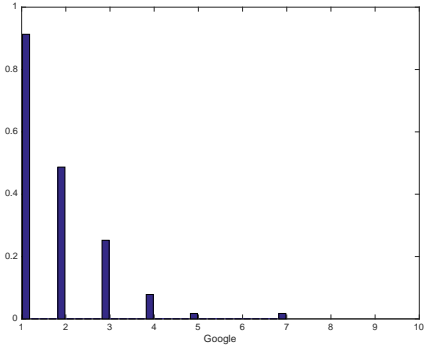}}
   \subfigure[Linkedin]
   {\includegraphics[width=6.75cm]{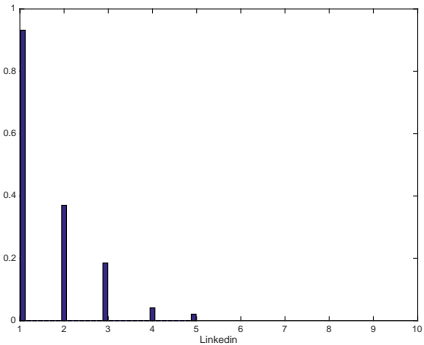}}
   \subfigure[Twitter]
   {\includegraphics[width=6.75cm]{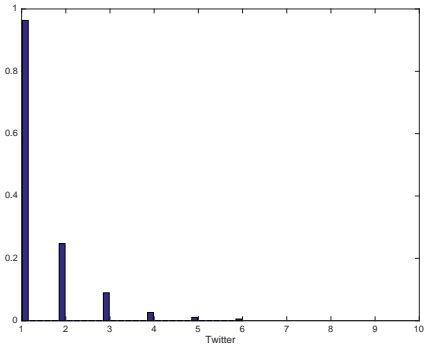}}
   \caption{Histograms of the discretized daily returns for Facebook, Google, Linkedin and Twitter.}
   \label{DicretHists}
   \end{figure}
We have obtained the posterior distributions for the parameter $\alpha$ of the transformed Yule-Simon distribution by Monte Carlo methods. We run 25,000 iterations with a burn-in period of $5,000$ iterations. We have reported the chain and the histogram of the posterior distributions in Figure \ref{Facebook} and in Figure \ref{Google}, with the corresponding summary statistics in Table \ref{T2}. Note that, with the purpose of limiting the amount of space used, we have included the plots of the Facebook and Google daily returns only.

\begin{figure}[h!]
\centering
	\subfigure[]
  	{\includegraphics[width=6.75cm]{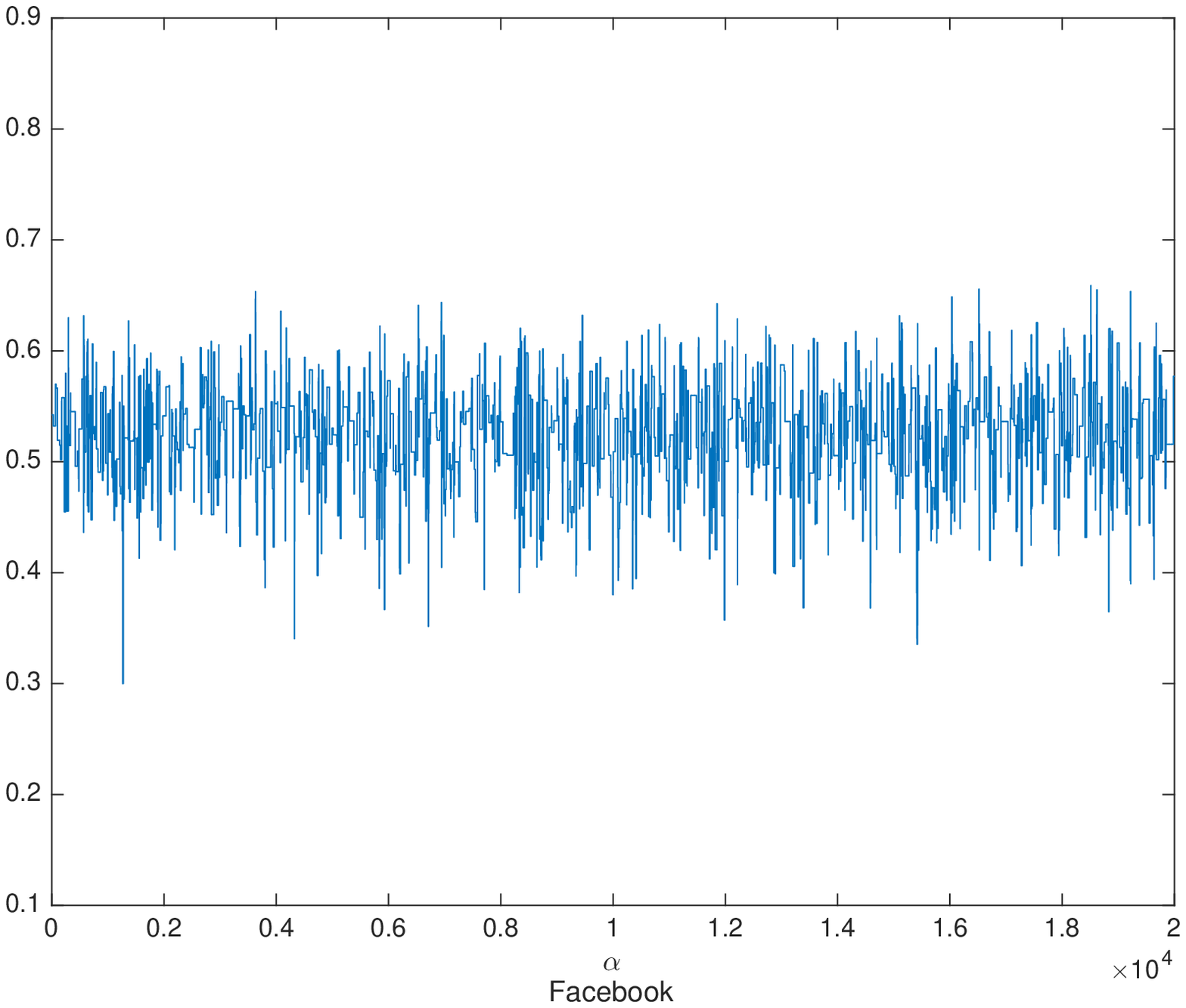}}
  	\subfigure[]
   	{\includegraphics[width=6.75cm]{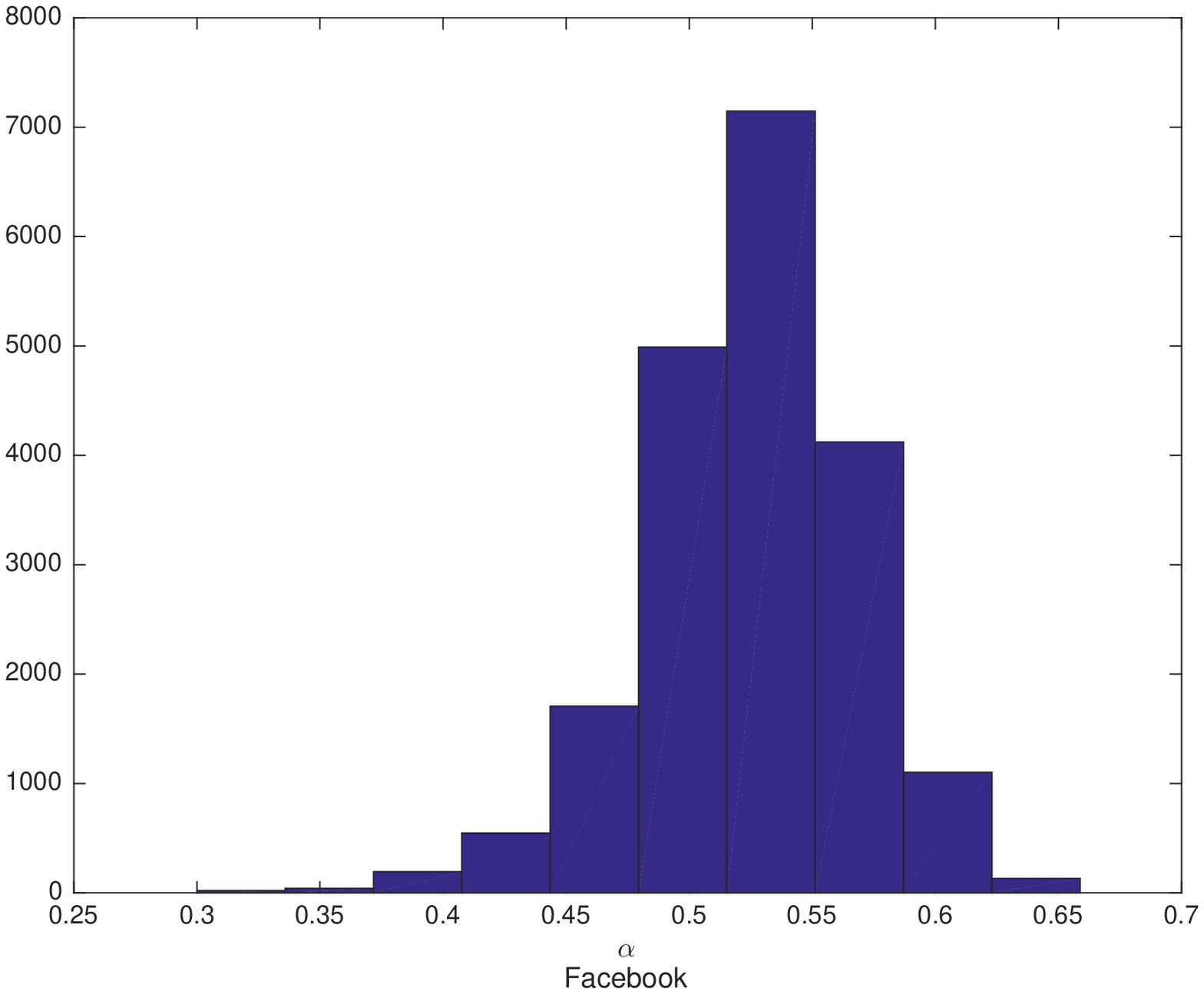}}
  	\subfigure[]
   	{\includegraphics[width=6.75cm]{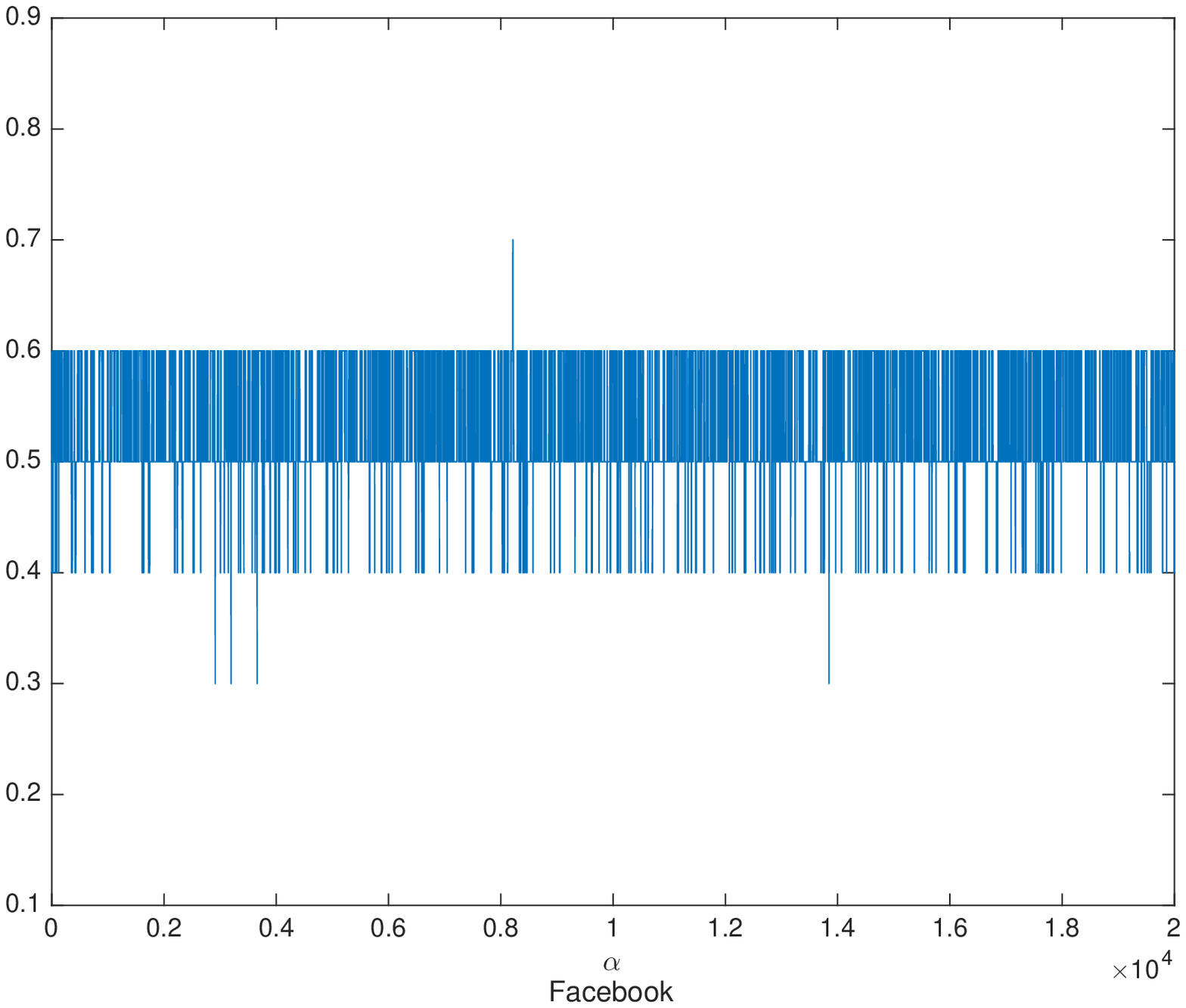}}
   	\subfigure[]
   	{\includegraphics[width=6.75cm]{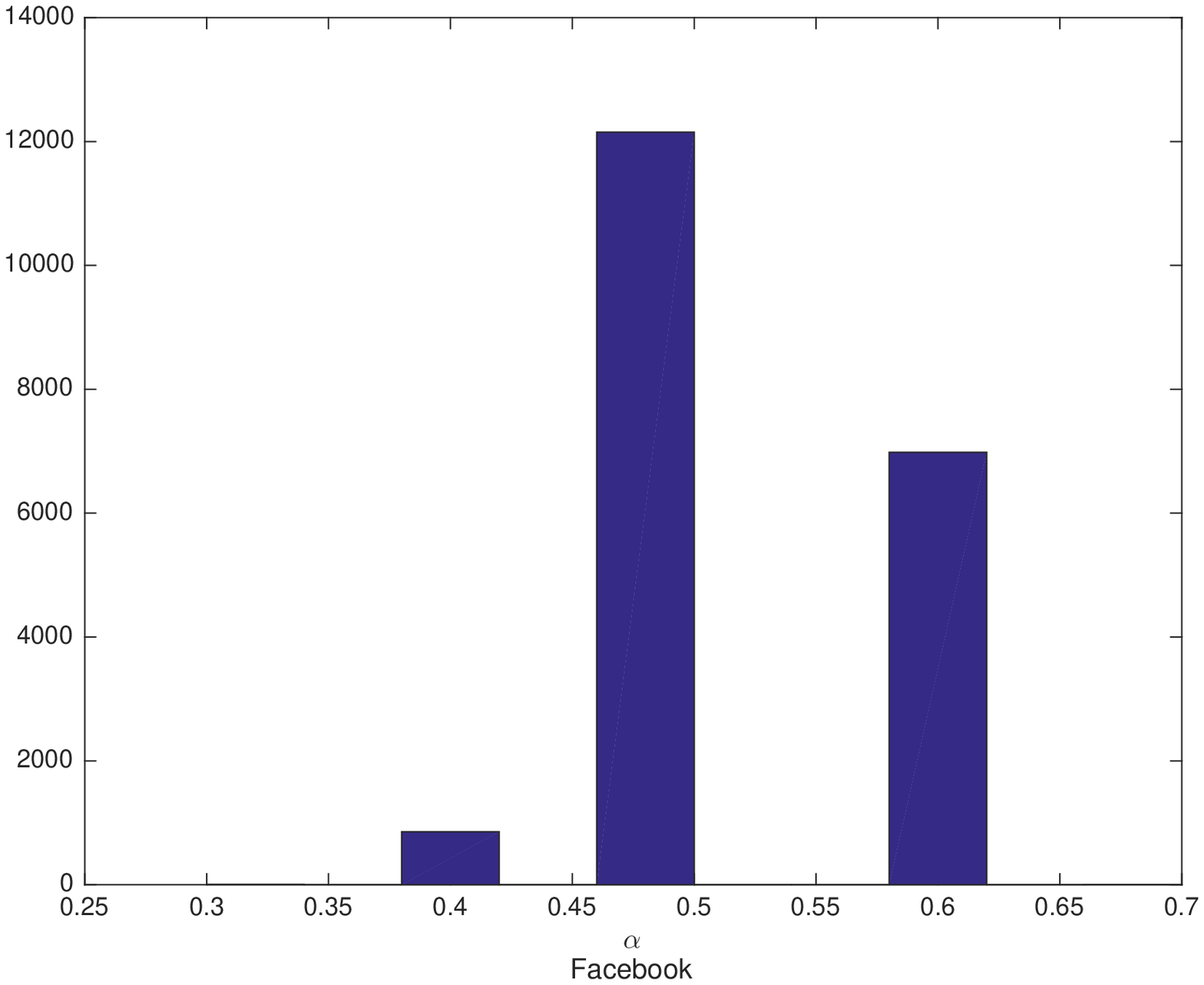}}
   	\subfigure[]
   	{\includegraphics[width=6.75cm]{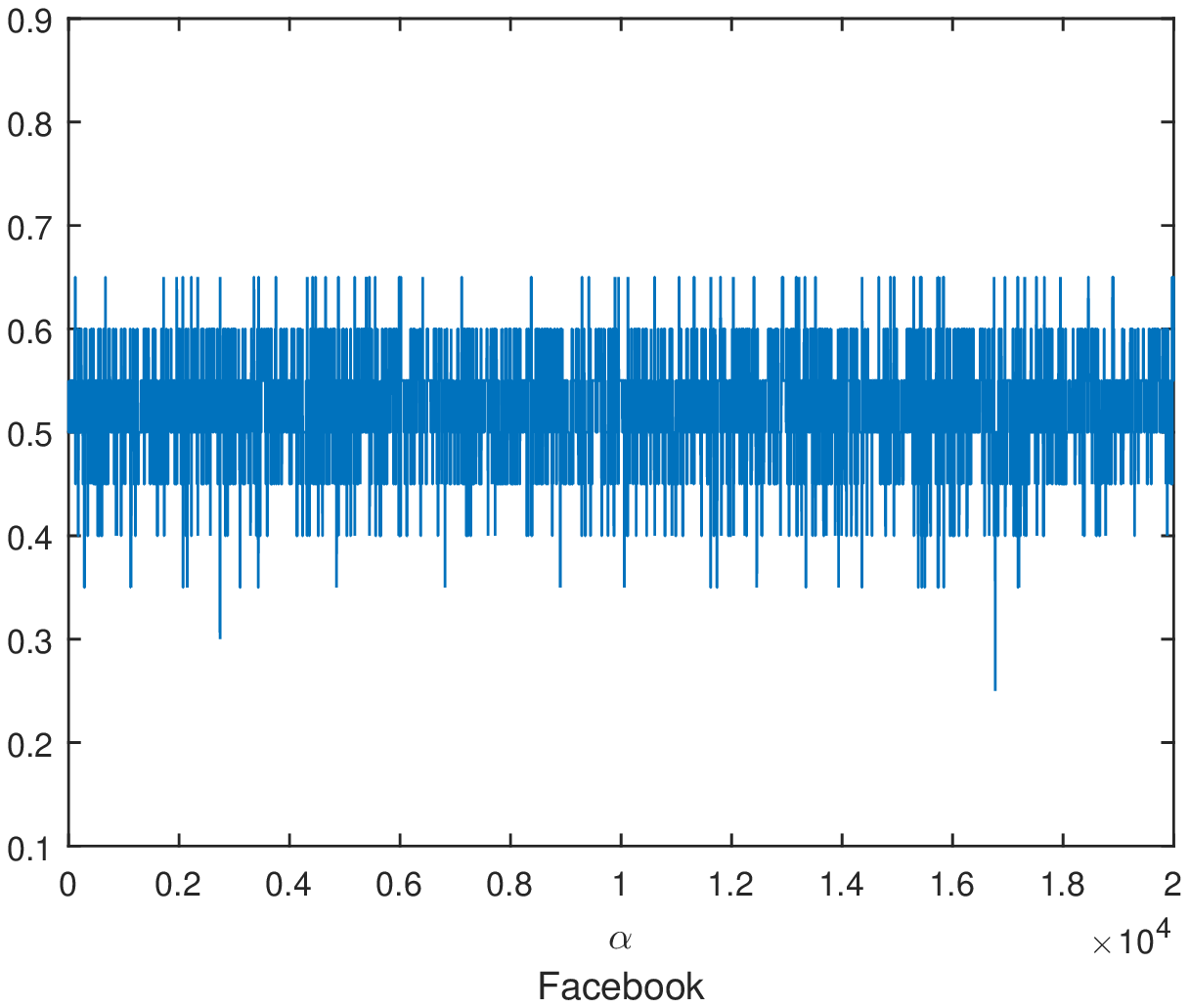}}
    \subfigure[]
   	{\includegraphics[width=6.75cm]{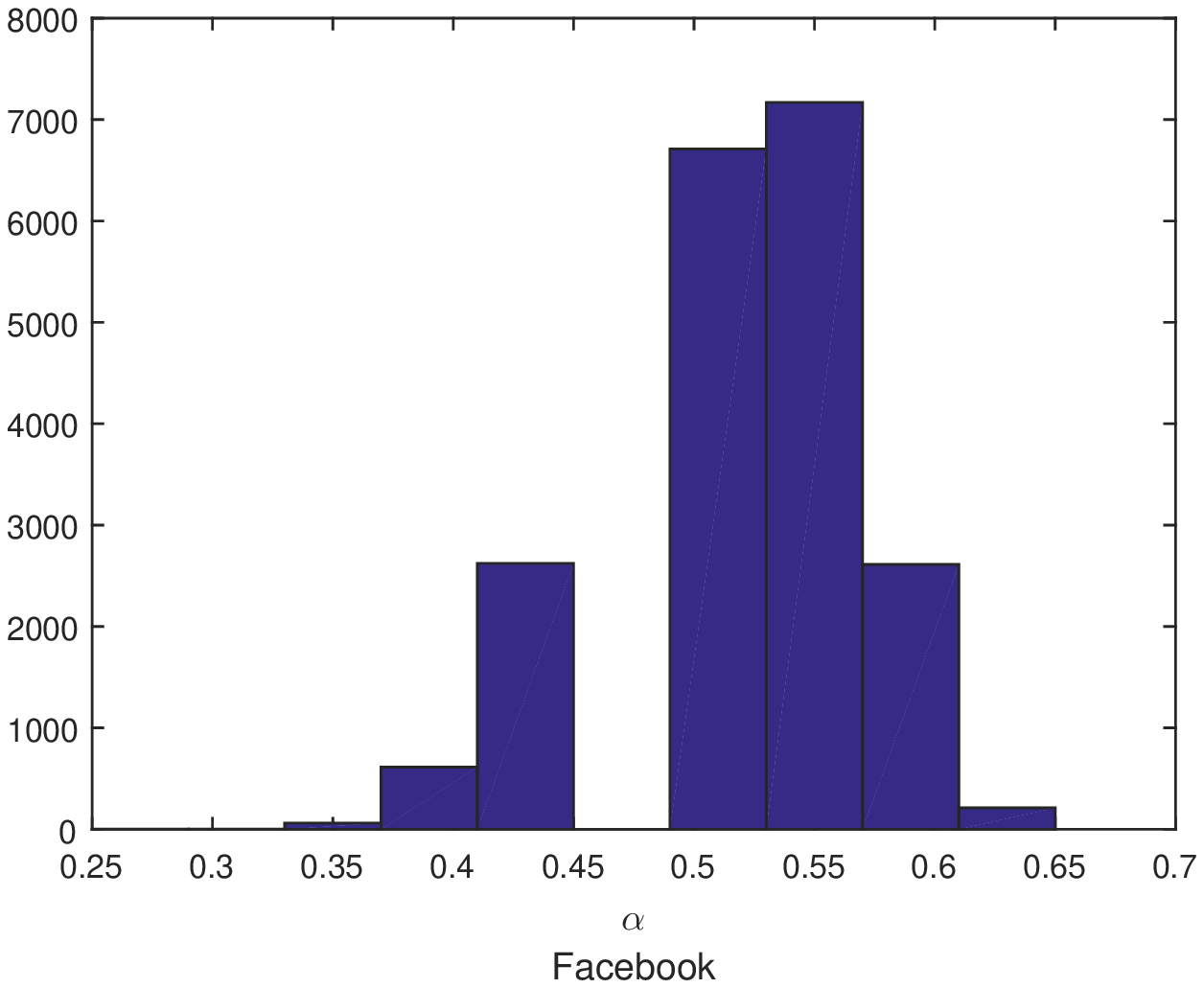}}
\caption{Posterior samples (left) and posterior histograms (right) for the Facebook daily returns obtained by applying the Jeffreys prior (top), the loss-based prior with $M=10$ (middle) and the loss-based prior with $M=20$ (bottom).}
\label{Facebook}
\end{figure}

\begin{figure}[h!]
\centering
	\subfigure[]
  	{\includegraphics[width=6.75cm]{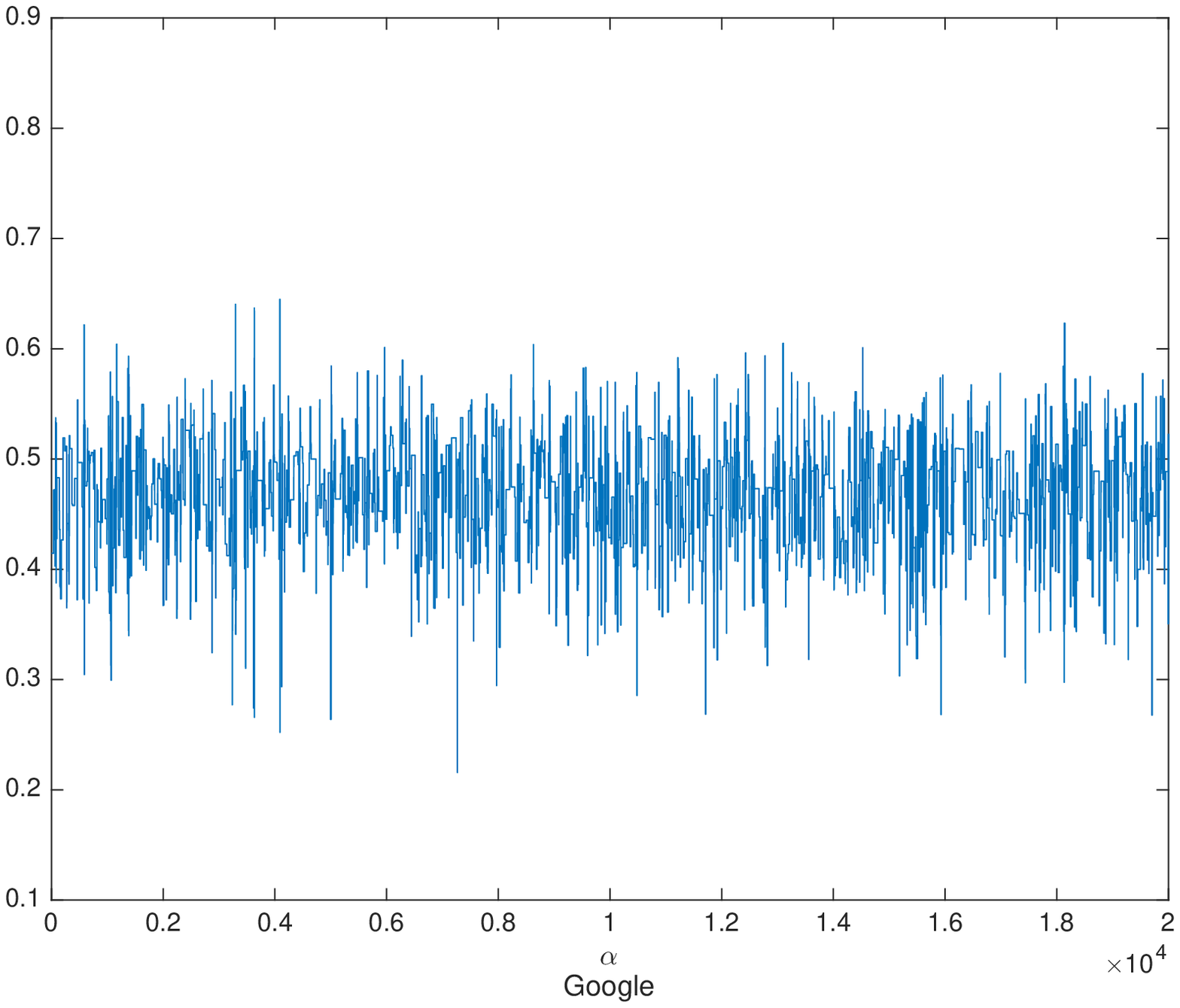}}
  	\subfigure[]
   {\includegraphics[width=6.75cm]{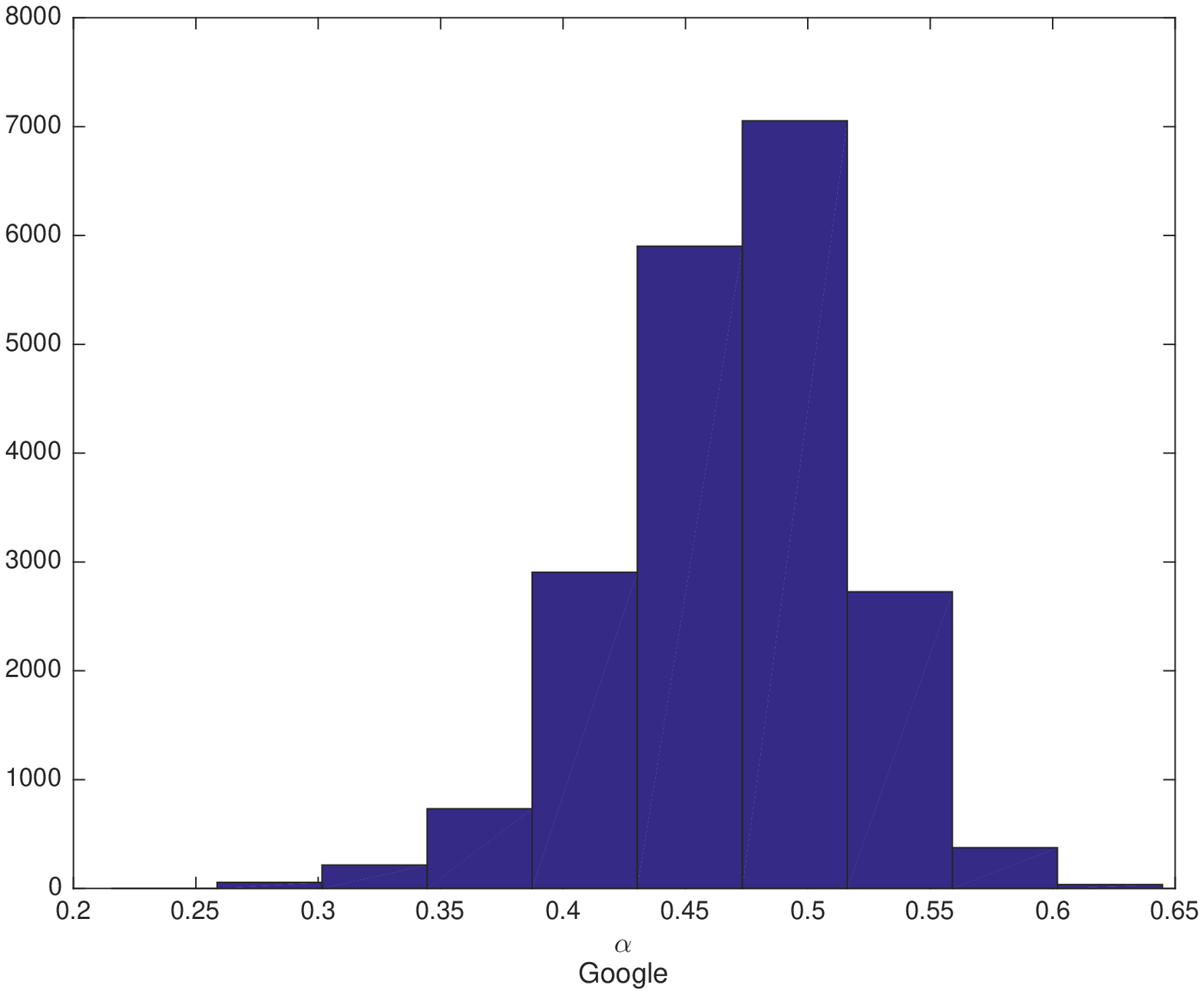}}
   \subfigure[]
   {\includegraphics[width=6.75cm]{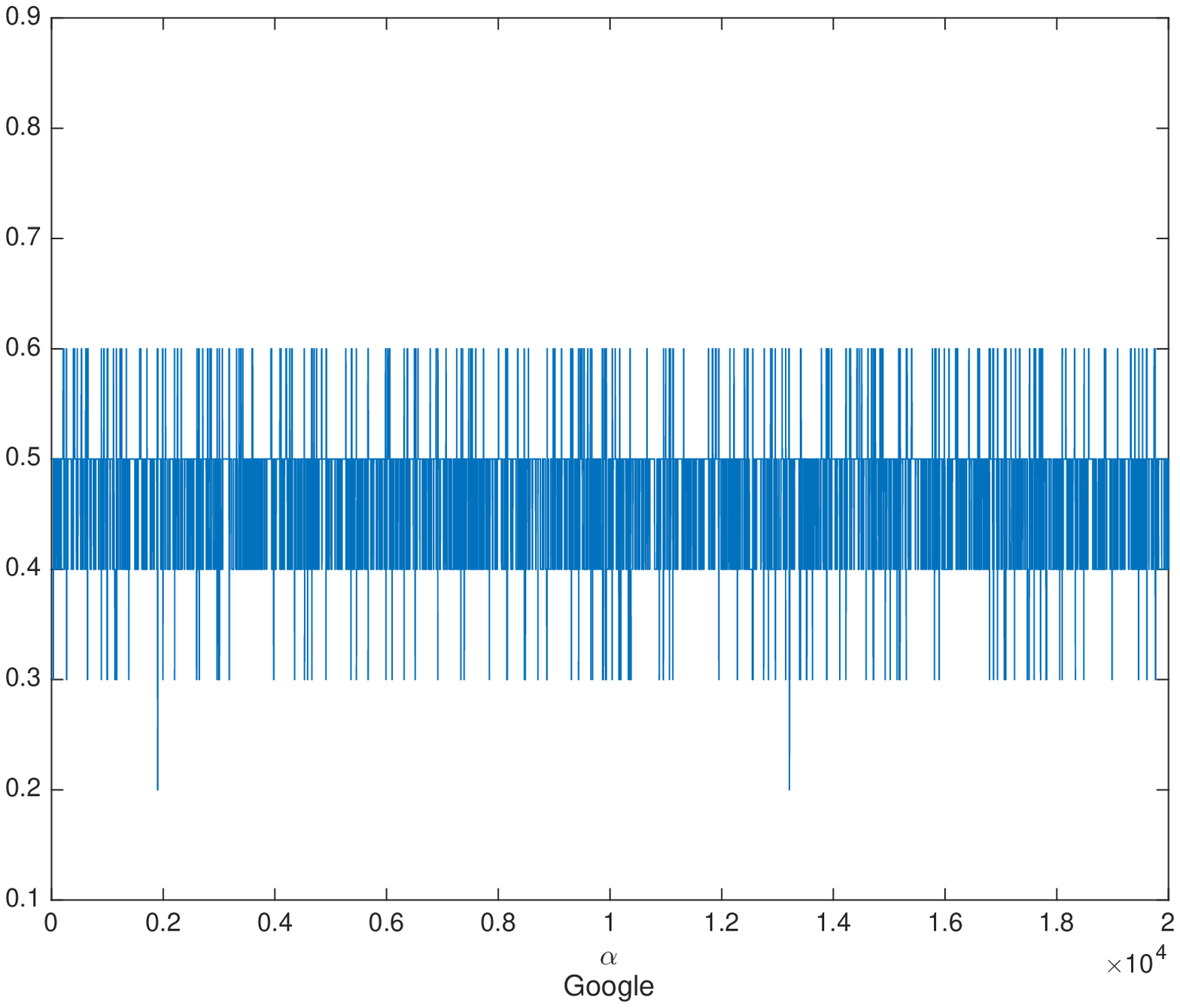}}
   \subfigure[]
   {\includegraphics[width=6.75cm]{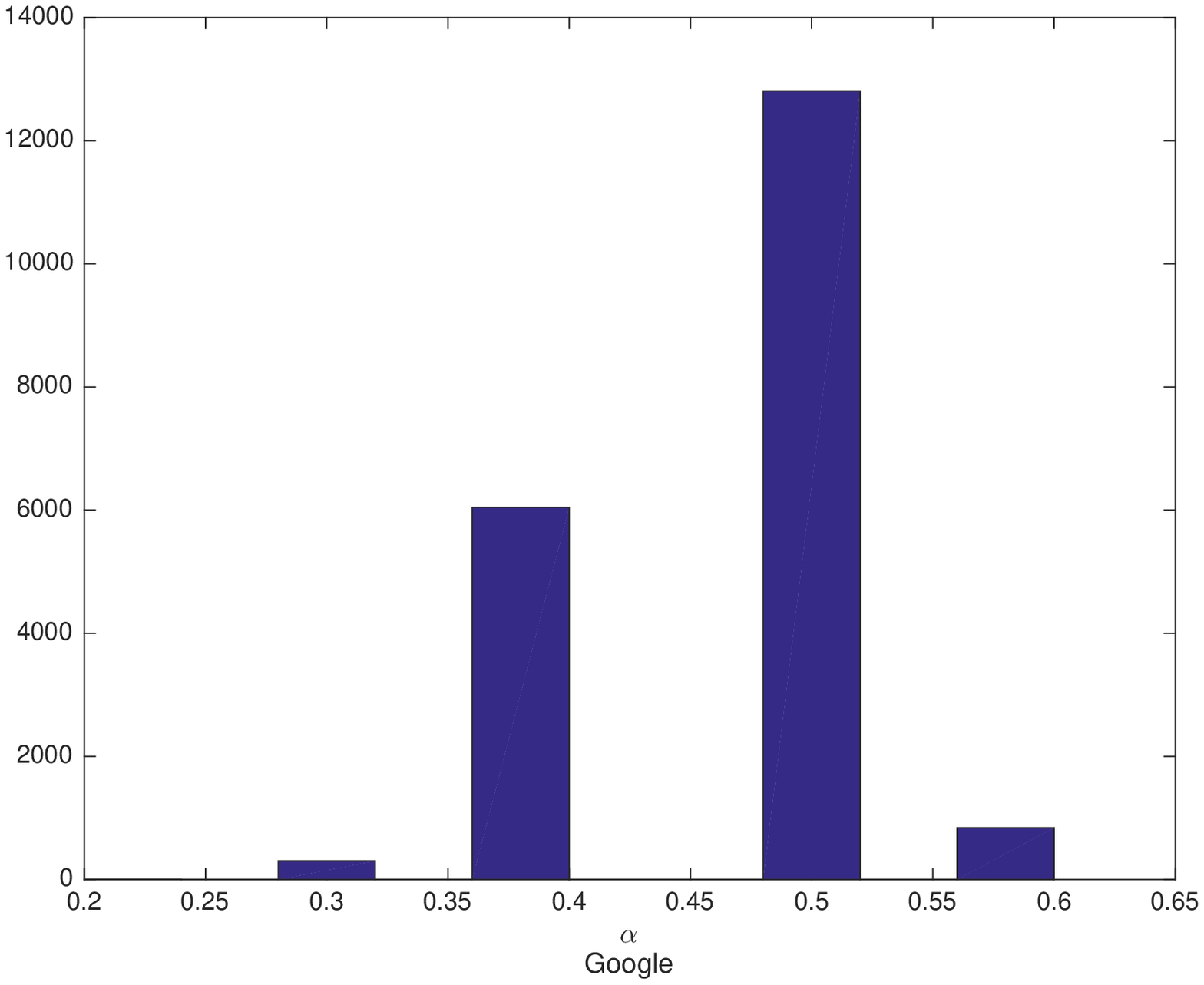}}
   \subfigure[]
   {\includegraphics[width=6.75cm]{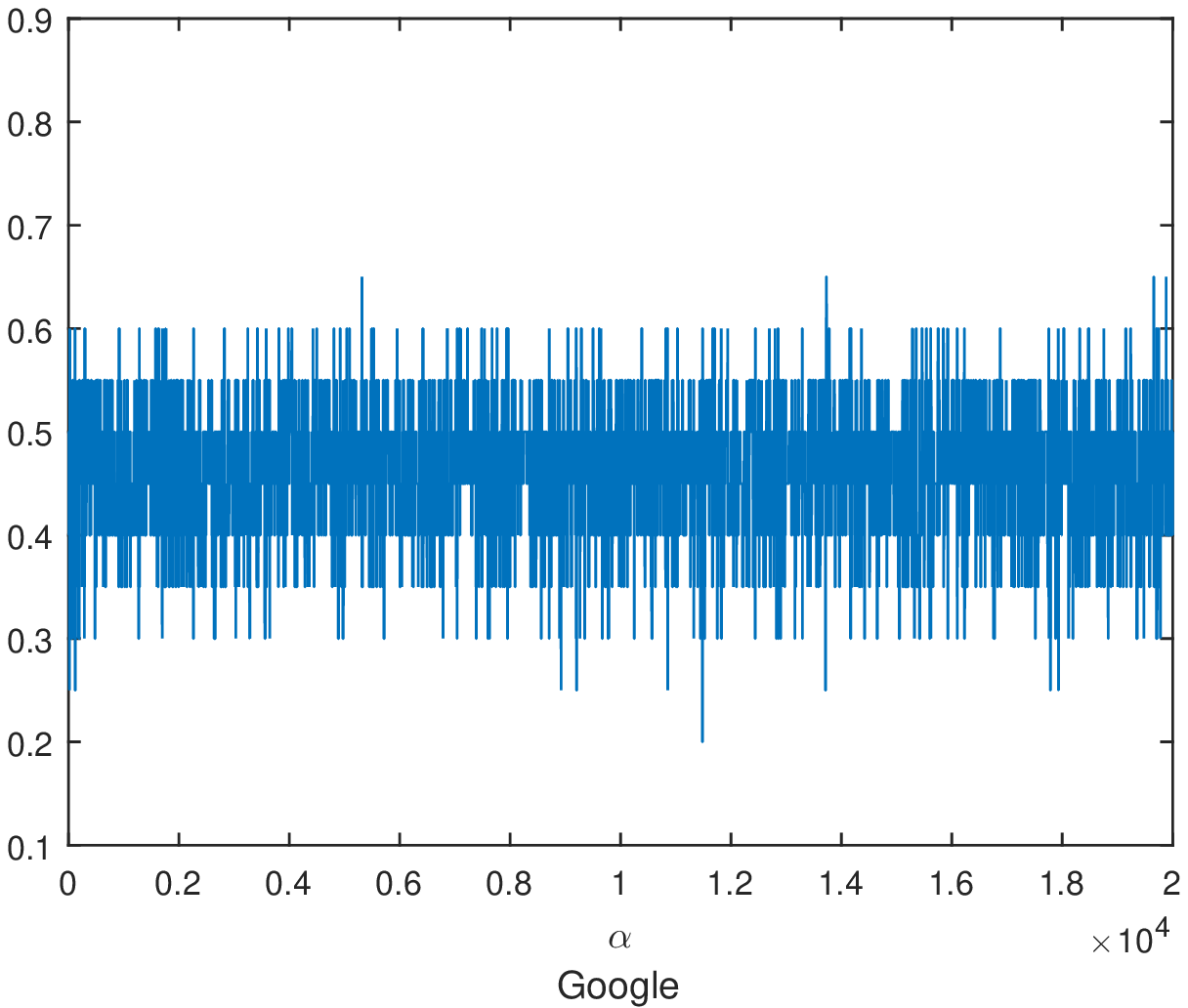}}
   \subfigure[]
   {\includegraphics[width=6.75cm]{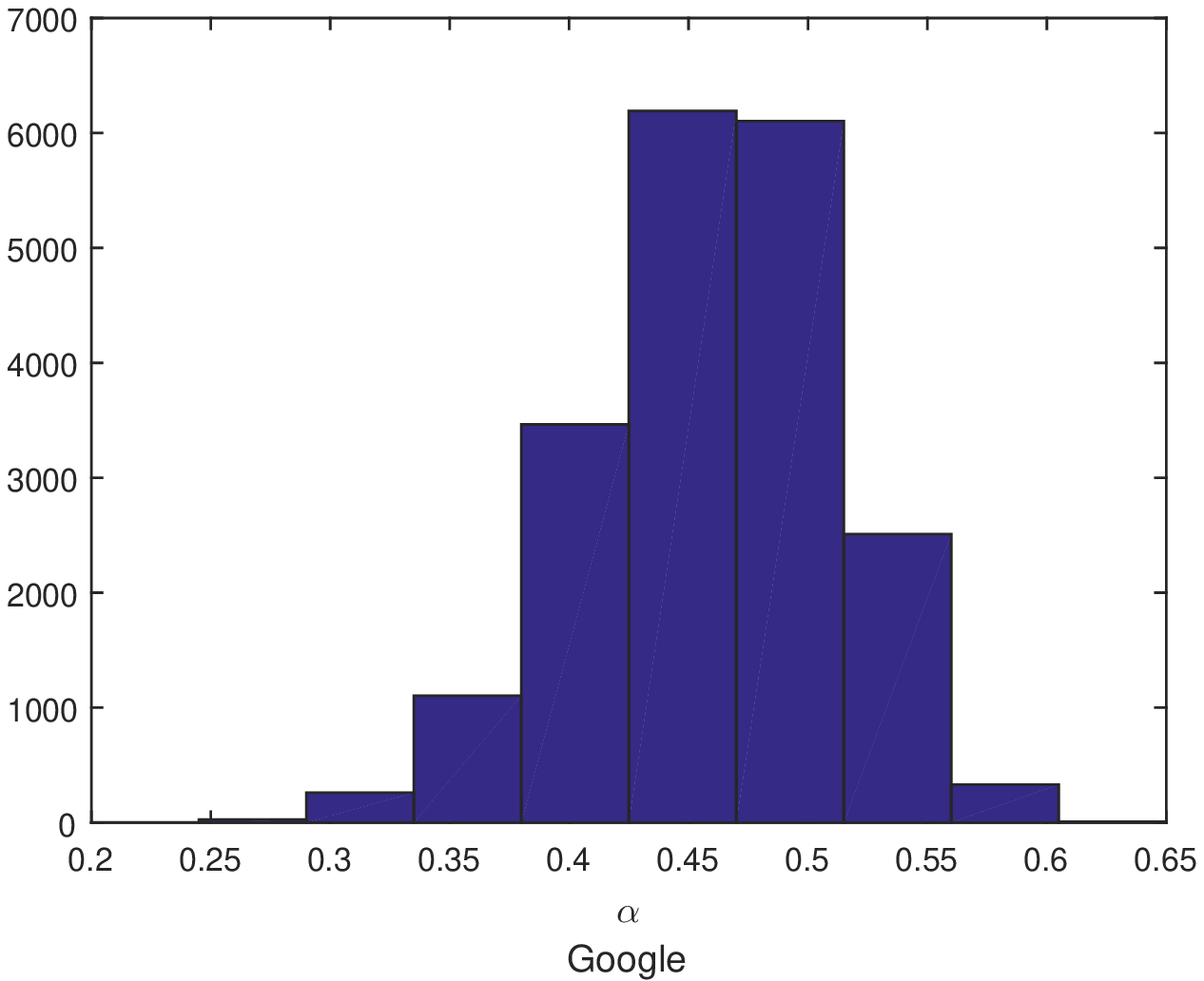}}
\caption{Posterior samples (left) and posterior histograms (right) for the Google daily returns obtained by applying the Jeffreys prior (top), the loss-based prior with $M=10$ (middle) and the loss-based prior with $M=20$ (bottom).}
\label{Google}
\end{figure}

\begin{table}[h!]
\centering
\begin{tabular}{ccccc}
\hline
Company & Prior & Mean & Median & $95\%$  C.I. \\
\hline
Facebook & Jeffreys & 0.53 & 0.53 & (0.43, 0.61) \\
Facebook & Loss-based $(M=10)$ & 0.53 & 0.5 & (0.4, 0.6) \\
Facebook & Loss-based $(M=20)$ & 0.52 & 0.55 & (0.40, 0.60) \\
\hline
Google & Jeffreys & 0.47 & 0.47 & (0.37, 0.55) \\
Google & Loss-based $(M=10)$ & 0.47 & 0.5 & (0.4, 0.6) \\
Google & Loss-based $(M=20)$ & 0.47 & 0.46 & (0.35, 0.55) \\
\hline
Linkedin & Jeffreys & 0.56 & 0.57 & (0.47, 0.64) \\
Linkedin & Loss-based $(M=10)$ & 0.57 & 0.6 & (0.5, 0.6) \\
Linkedin & Loss-based $(M=20)$ & 0.56 & 0.55 & (0.45, 0.65) \\
\hline
Twitter & Jeffreys & 0.68 & 0.68 & (0.62, 0.73) \\
Twitter & Loss-based $(M=10)$ & 0.69 & 0.7 & (0.6, 0.7) \\
Twitter & Loss-based $(M=20)$ & 0.68 & 0.70 & (0.60, 0.75) \\
\hline
\end{tabular}
\caption{Summary statistics of the posterior distribution for the parameter $\alpha$ of the social network stock index data.}
\label{T2}
\end{table}
For all the four assets we notice that the results for $\alpha$ are very similar, as can be inferred by the minimal (or absence of) difference between the means and the medians. The credible intervals, as well, are very similar, with a slight larger size for the case where the loss-based prior with ($M=20$) is applied. One way of interpreting the results is as follows. The parameter $\alpha$ can be seen as the probability that the next observation is different from the ones observed so far, and therefore we note that Twitter has the highest chance to take a daily increment not yet observed, while Google has the smallest. 

\subsection{Census Data - Surname analysis}
\label{Census}

\begin{table}
\begin{center}
\begin{tabular}{cccccc}
\hline
\#& Surname & Frequency & \#& Surname & Frequency \\
\hline
1& Smith & 2502021 & 6 & Davis & 1193807 \\
2& Johnson & 2014550 & 7 & Miller & 1054530 \\
3& Williams & 1738482 & 8 & Wilson & 843126 \\
4& Jones & 1544488 & 9 & Moore & 775975 \\
5& Brown & 1544488 & 10 & Taylor & 773488 \\
\hline
\end{tabular}
\caption{Ten most common Surname in United States from the Census 1990 analysis.}
\label{TCensus}
\end{center}
\end{table}
The second example we examine  concerns with the frequency of surnames in the US (\href{http://www.census.gov/en.html}{http://www.census.gov/en.html}). From the population censuses \citep{Mar10}, we focus on the US Census completed in 1990 and consider the first 500 most common surnames. Refer to Table \ref{TCensus} for a list of the first 10 most frequent surnames.
Briefly, the process followed by \cite{Mar10} to obtain the data converts the surname with Senior (SR), Junior (JR) or a number in the last name field (f.e. Moore Sr or Moore Jr or Moore III are converted to Moore) and, in addition, the authors  examined each name entry for the possibility of an inversion (e.g. a first name appearing in the last name fields or vice-versa). However, as there is the possibility of having many surnames that also inverted can sound absolutely right, the authors considered also the surname of the spouse, obtaining additional information to invert the name field of the entire family.

The analysis has been performed by running both the Markov Chain Monte Carlo for 25,000 iterations, with a burn-in of 5,000 iterations.

The posterior samples and the posterior histograms are shown in Figure \ref{Surname}, with the corresponding summary statistics of the posterior distributions reported in Table \ref{TSurn}. We again notice similarities to the simulation study and the analysis of daily increments, in the sense that means and medians are very similar for each prior, and the $95\%$ credible interval obtained by applying the loss-based prior with $M=20$ is slightly larger than the one obtained by using either the Jeffreys prior or the loss-based prior with $M=10$.
\begin{figure}[h!]
\centering
	\subfigure[]
	{\includegraphics[width=6.75cm]{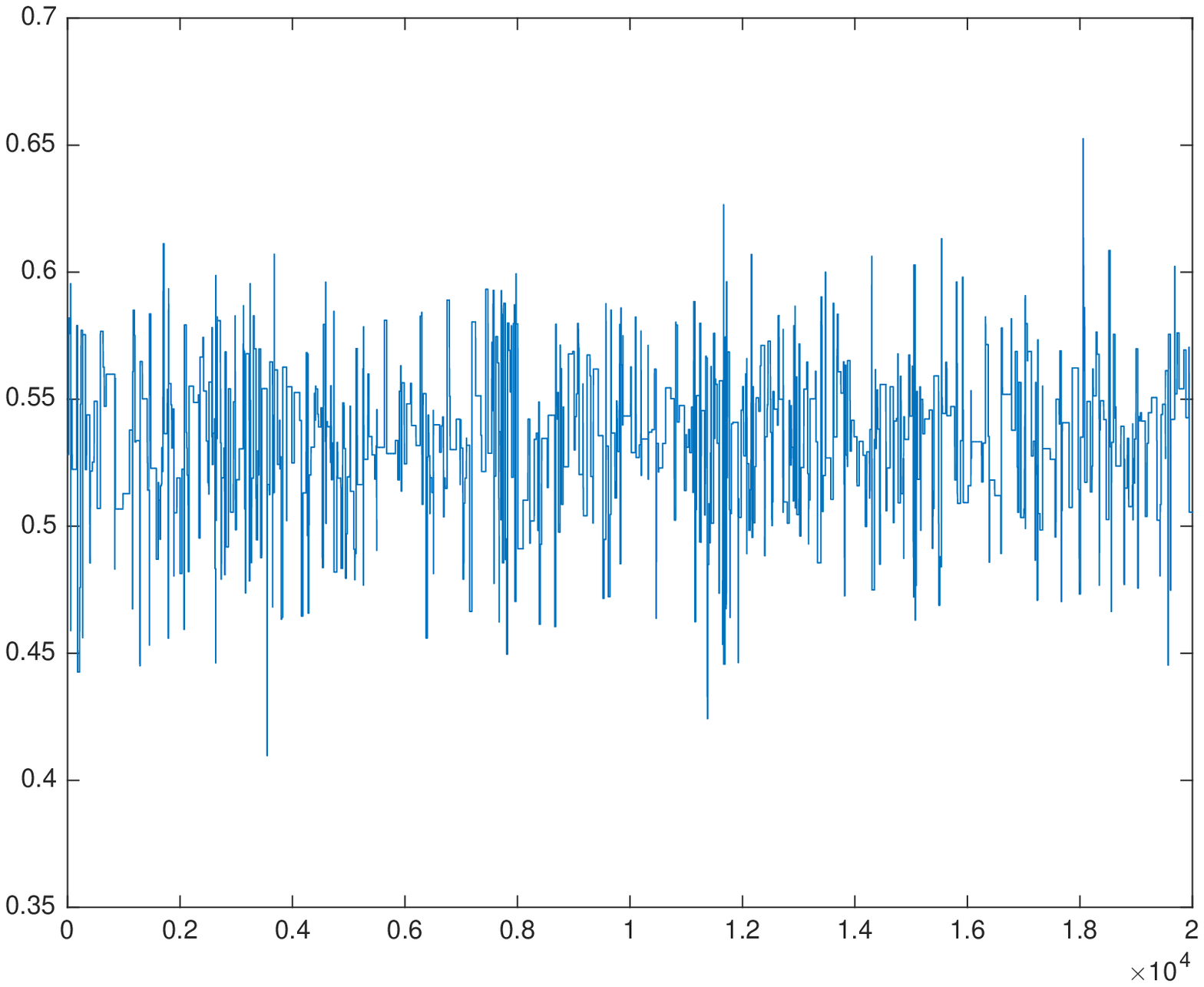}}
	\subfigure[]
  	{\includegraphics[width=6.75cm]{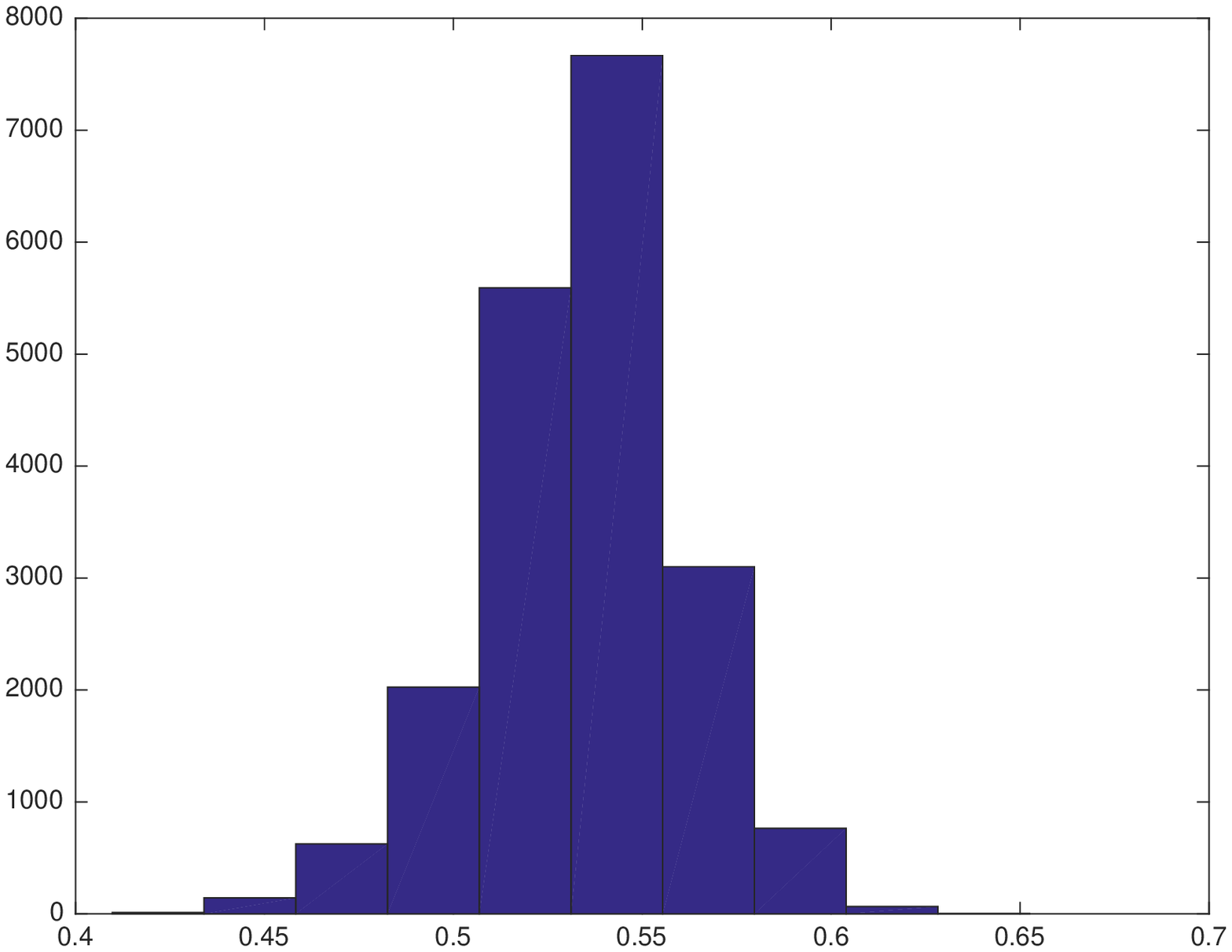}}
  	\subfigure[]
   	{\includegraphics[width=6.75cm]{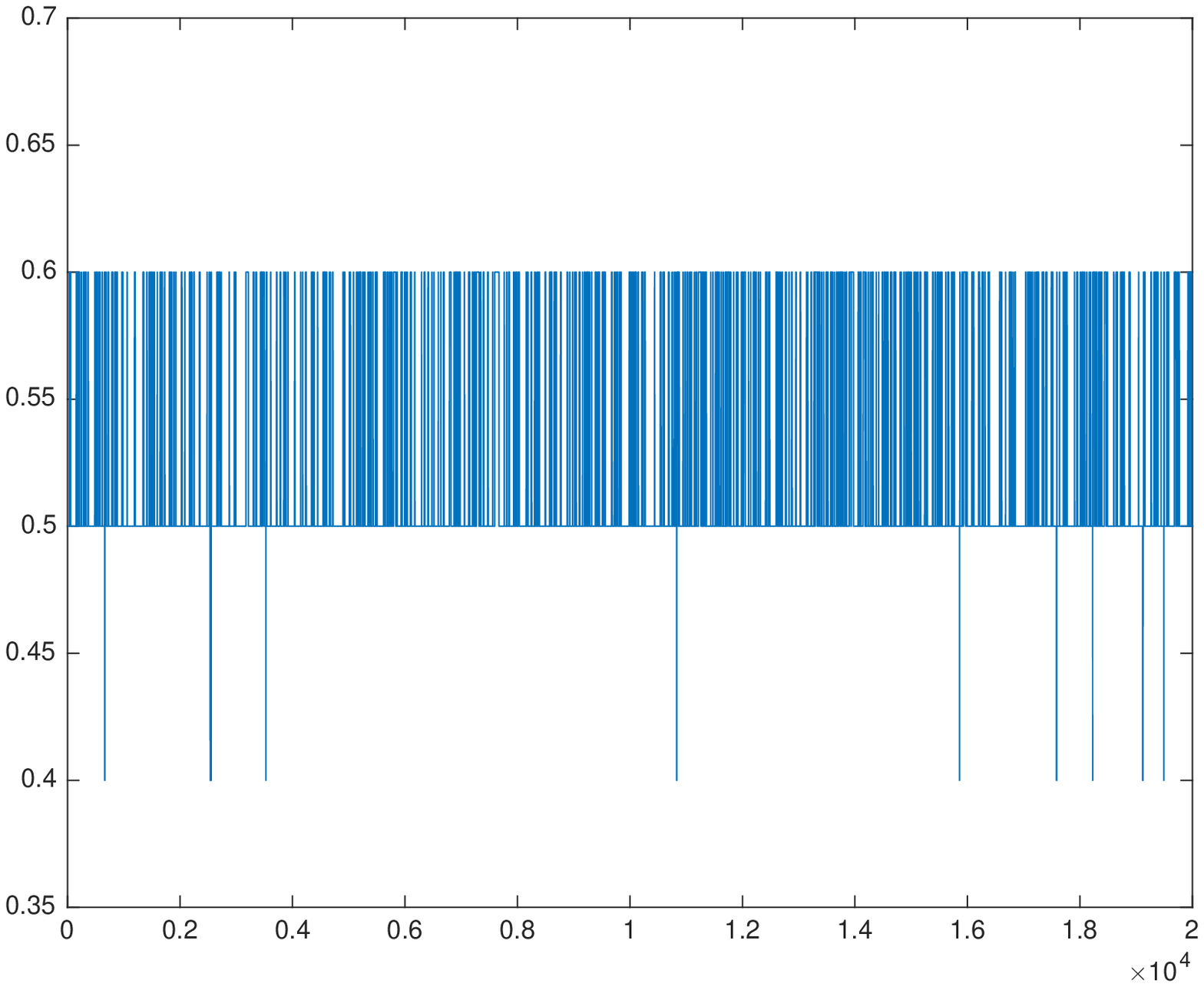}}
   	\subfigure[]
   	{\includegraphics[width=6.75cm]{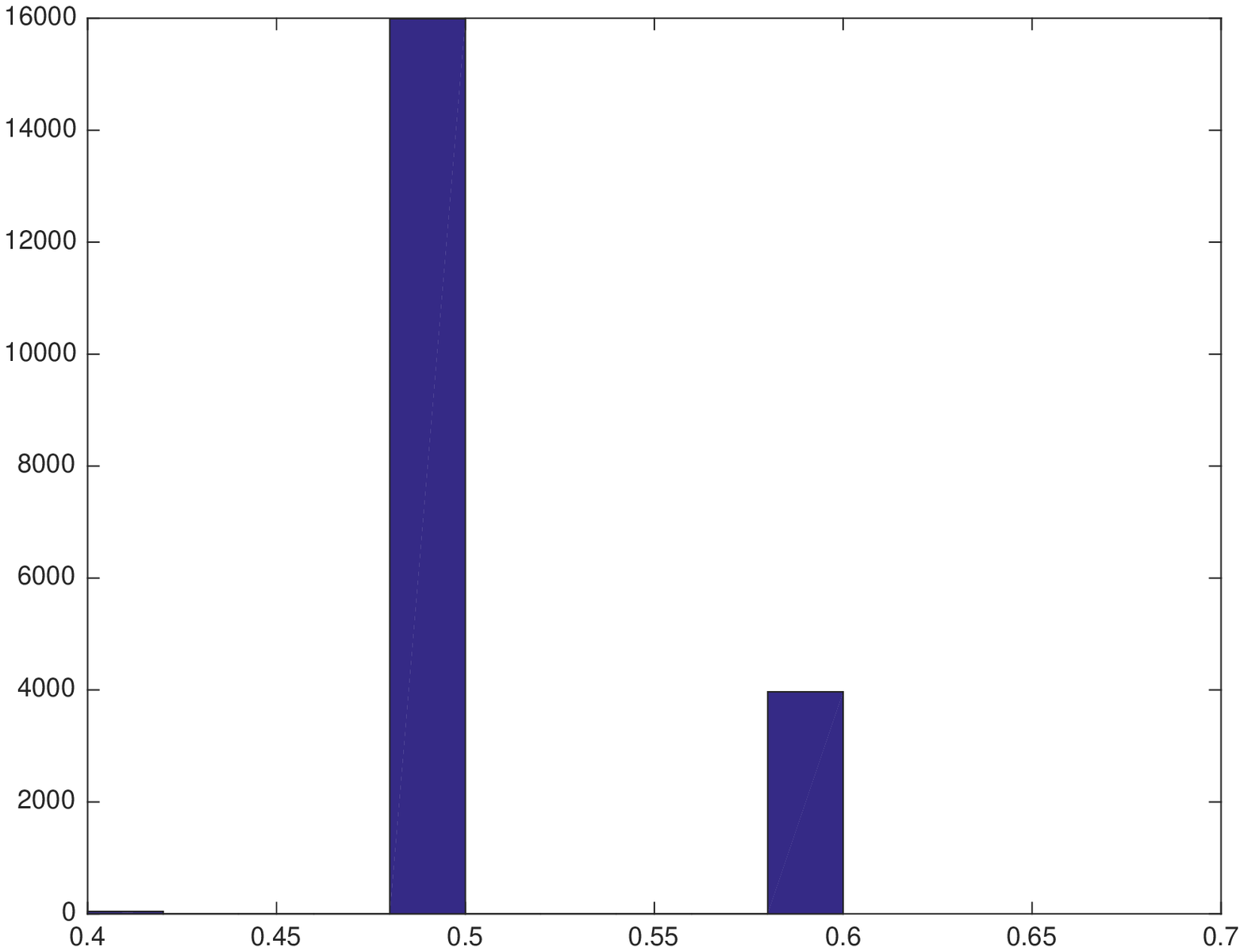}}
  	\subfigure[]
   	{\includegraphics[width=6.75cm]{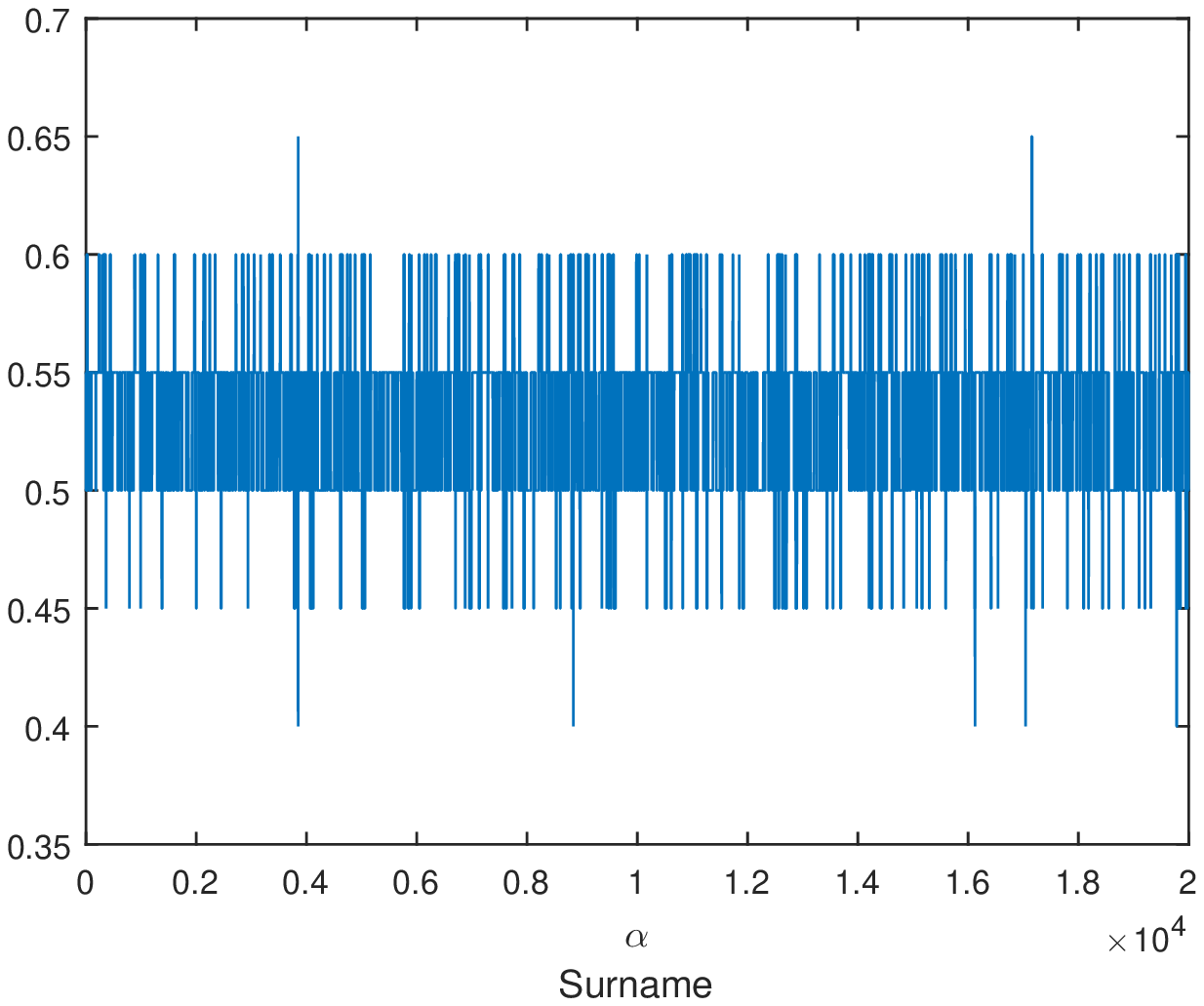}}
  	\subfigure[]
   	{\includegraphics[width=6.75cm]{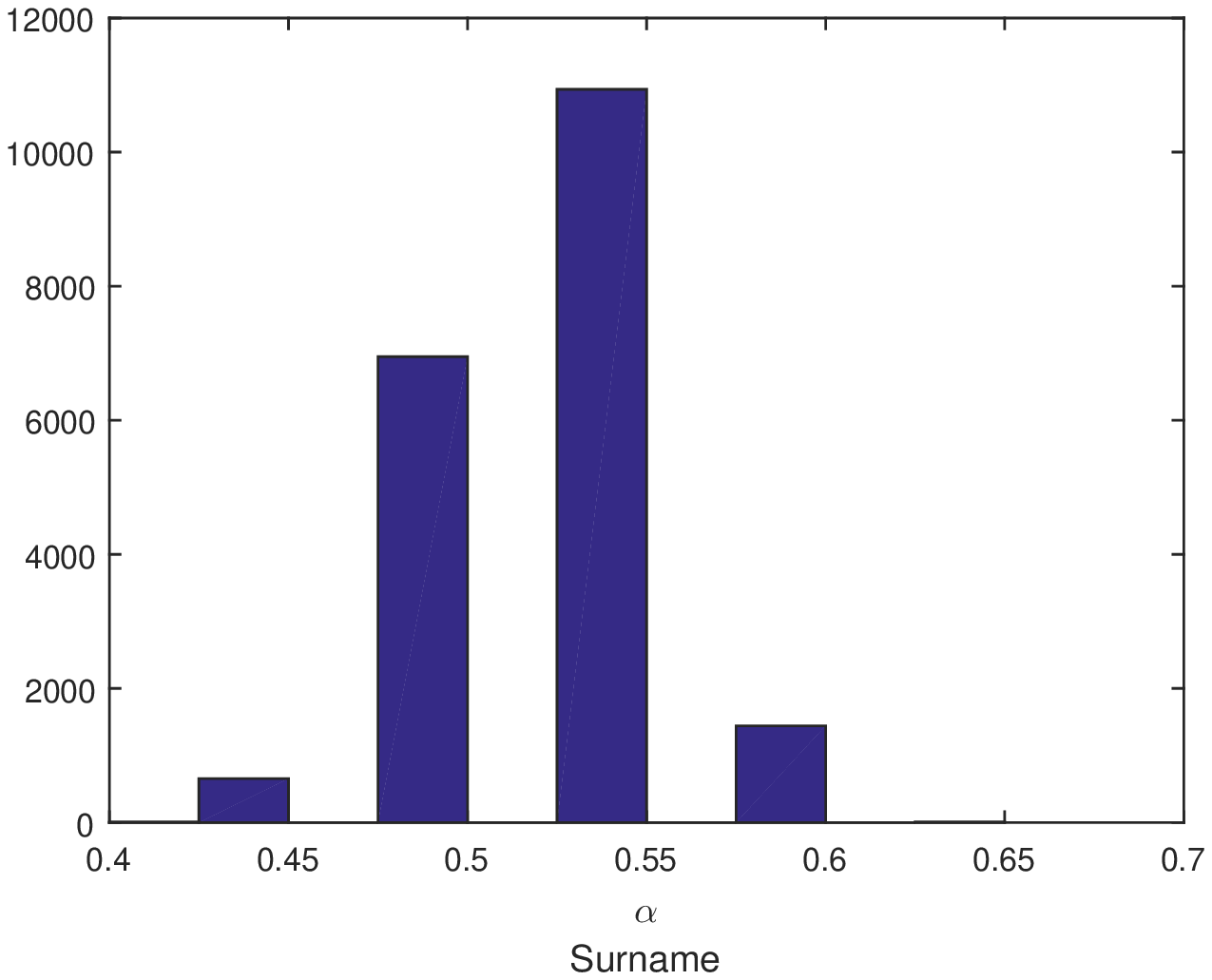}}
\caption{Posterior sample (left) and posterior histogram (right) for the surname data set obtained by applying the Jeffreys prior (top), the loss-based prior with $M=10$ (middle) and the loss-based prior with $M=20$ (bottom).}
\label{Surname}
\end{figure}

\begin{table}
\centering
\begin{tabular}{cccc}
\hline
 Prior & Mean & Median & $95\%$  C. I. \\
\hline
Jeffreys & 0.53 & 0.54 & (0.47, 0.58) \\
Loss-based $(M=10)$ & 0.52 & 0.5 & (0.5, 0.6) \\
Loss-based $(M=20)$ & 0.53 & 0.55 & (0.45, 0.60) \\
\hline
\end{tabular}
\caption{Summary statistics of the posterior distributions for the parameter $\alpha$ of the Census surname analysis.}
\label{TSurn}
\end{table}
The estimated value of $\alpha$, on the basis of the 500 most common surnames in the US (and if we consider the mean) is, roughly, $1/2$. In other words, there are about $50\%$ chances that the next observed surname is not in the list of the 500. Obviously, a larger sample size would yield a smaller posterior mean, as the number of surnames is finite and the more we observe, the harder is to find a ``new'' one.

\subsection{`Superstardom' analysis}
The last example consists in modelling the number of `number one' hits a music artist had in the period 1955--2003 on the Billboard Hot 100 chart. The data, which is displayed in Table \ref{HITS-Data}, has been used by \cite{ChungCox94} and \cite{SpiVoo09} to show an apparent absence of correlation between talent and success in the music industry.

\begin{table}[h!]
\centering
\begin{tabular}{cc|cc}
\hline 
Hits & Observations & Hits & Observations \\ 
\hline 
1 & 119 & 9 & 4 \\ 
2 & 57 & 10 & 2 \\ 
3 & 30 & 11 & 1 \\ 
4 & 13 & 12 & 2 \\ 
5 & 10 & 13 & 1 \\ 
6 & 4 & 14 & 1 \\ 
7 & 1 & 15 & 1 \\ 
8 & 1 & 16 & 1 \\ 
\hline 
\end{tabular}
\caption{Number of `number one' hits per artist from 1955 to 2003.}
\label{HITS-Data}
\end{table}

We have run the Monte Carlo simulation for 25,000 iterations, with a burn in period of 5,000, for each of the considered priors. The posterior samples and histograms are shown in Figure \ref{Hits}, with the correspondinf statistic summaries in Table \ref{THits}.

\begin{figure}[h!]
\centering
	\subfigure[]
	{\includegraphics[width=6.75cm]{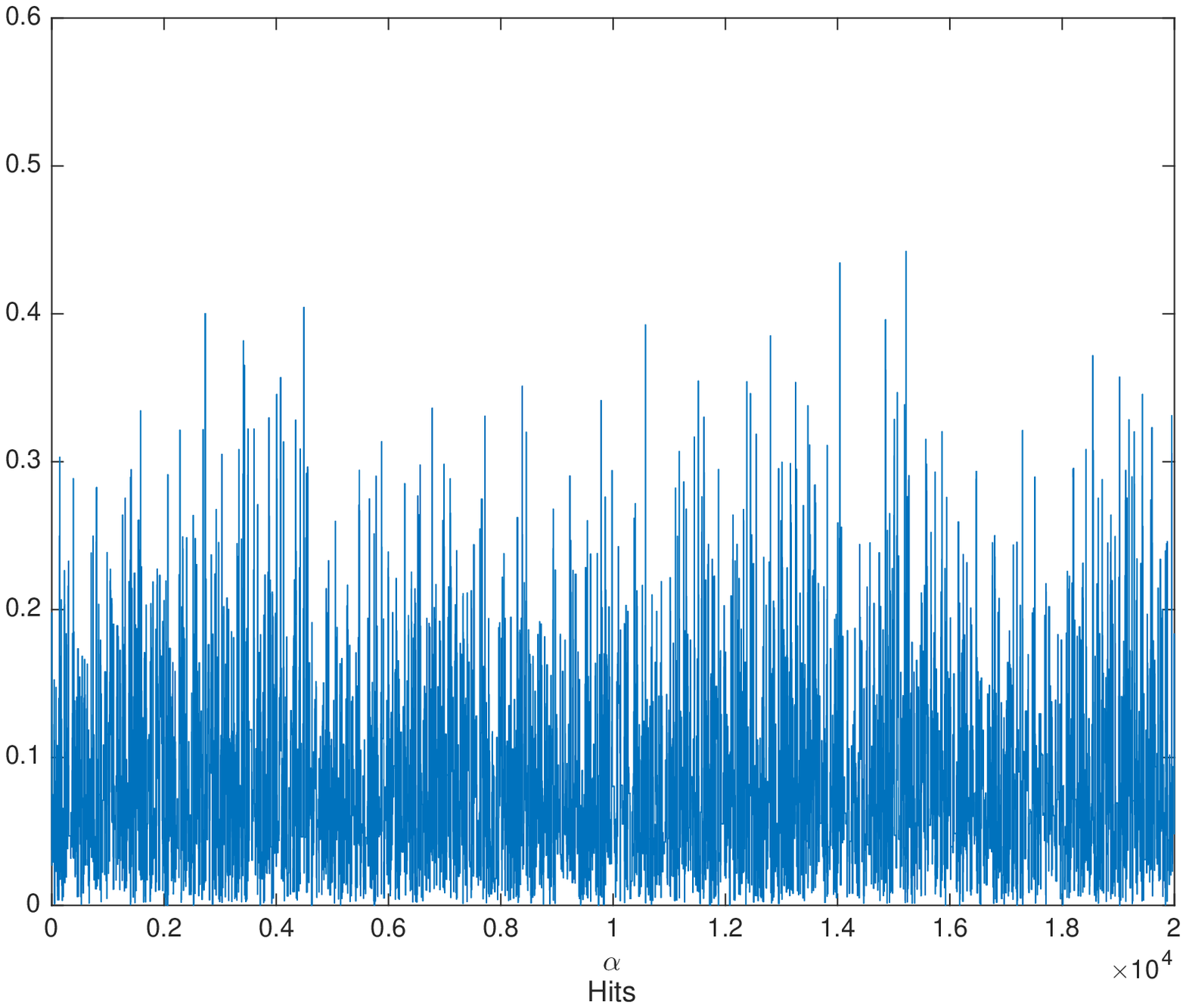}}
	\subfigure[]
    {\includegraphics[width=6.75cm]{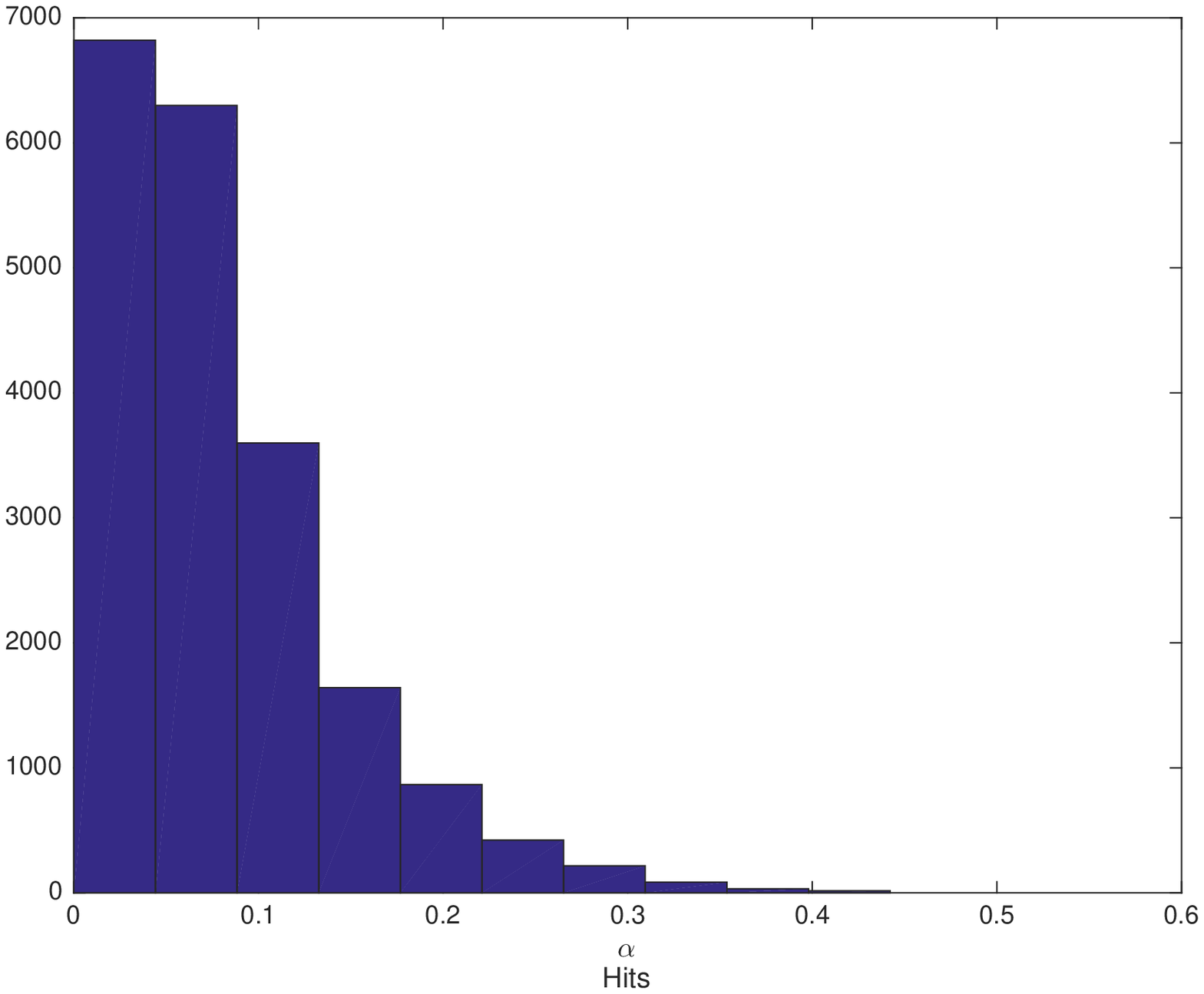}}
    \subfigure[]
   	{\includegraphics[width=6.75cm]{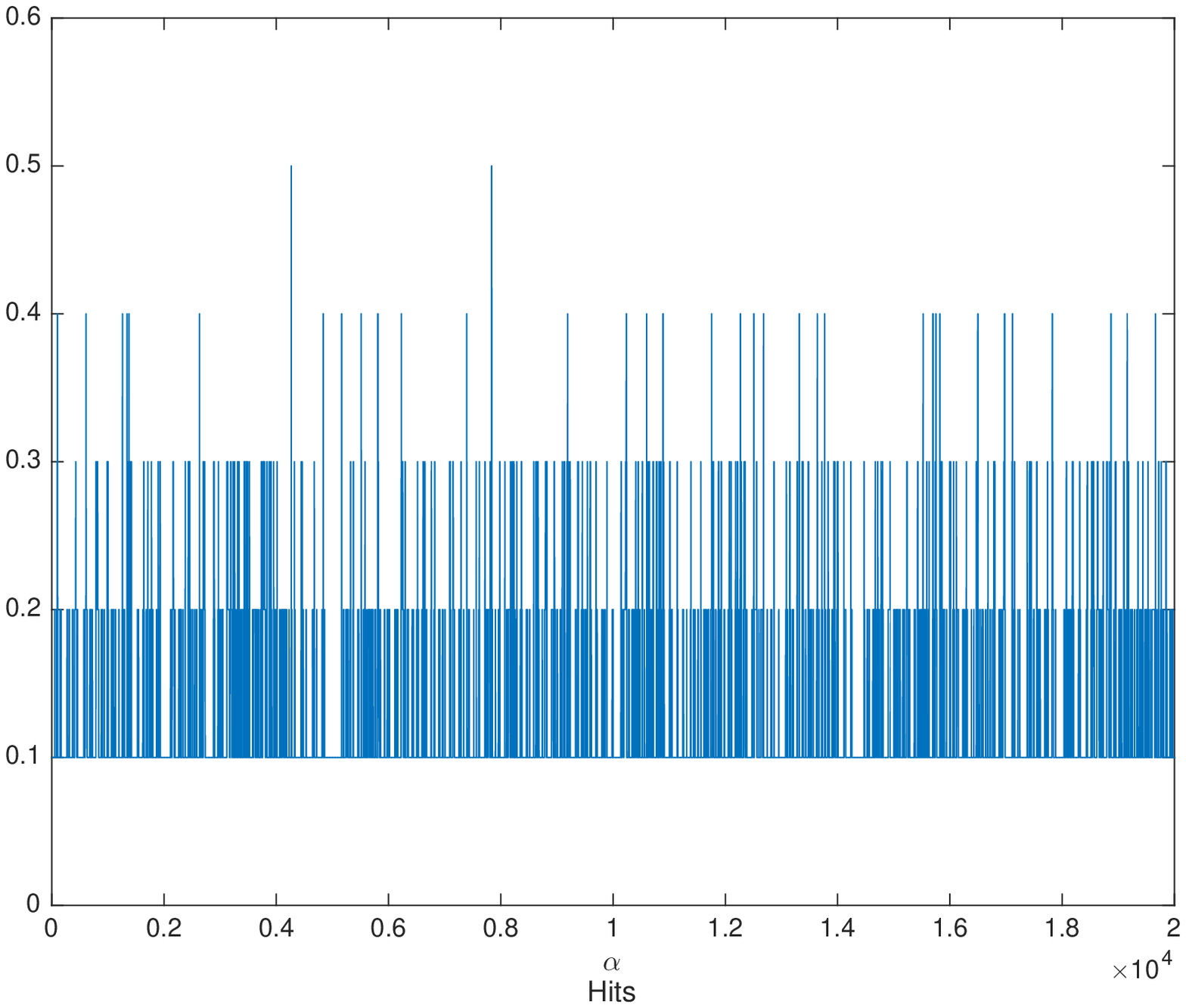}}
   	\subfigure[]
   	{\includegraphics[width=6.75cm]{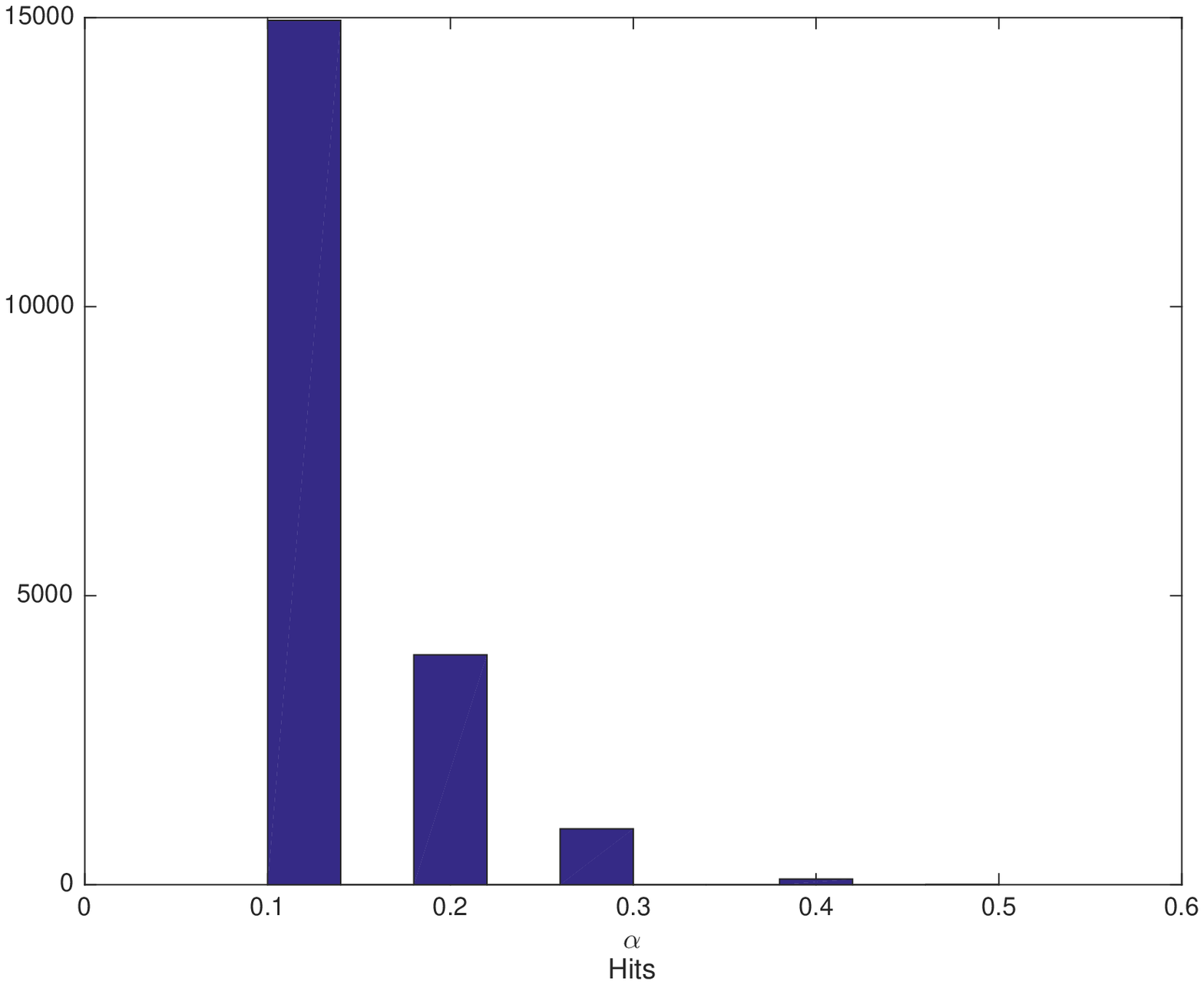}}
  	\subfigure[]
   	{\includegraphics[width=6.75cm]{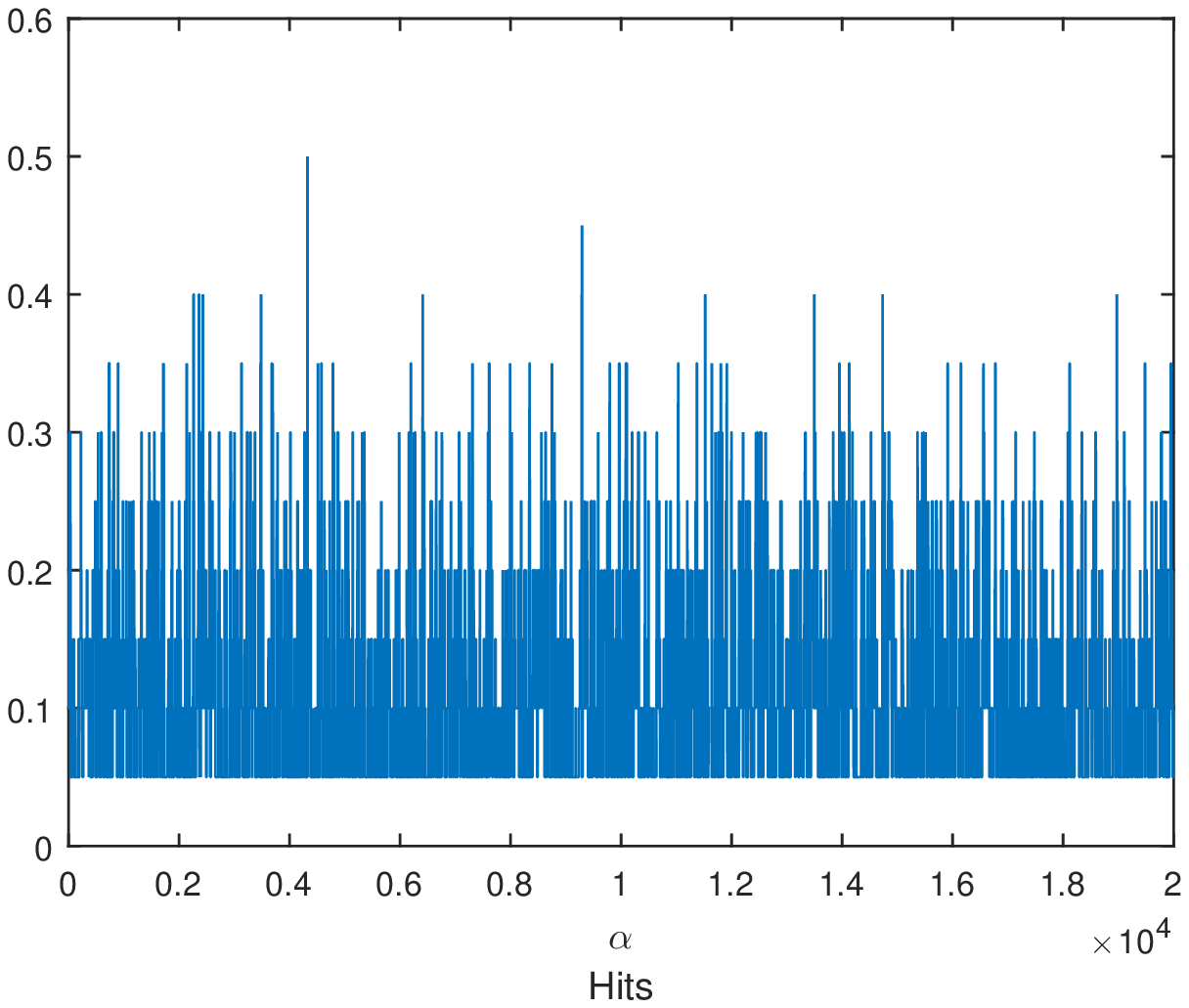}}
    \subfigure[]
   	{\includegraphics[width=6.75cm]{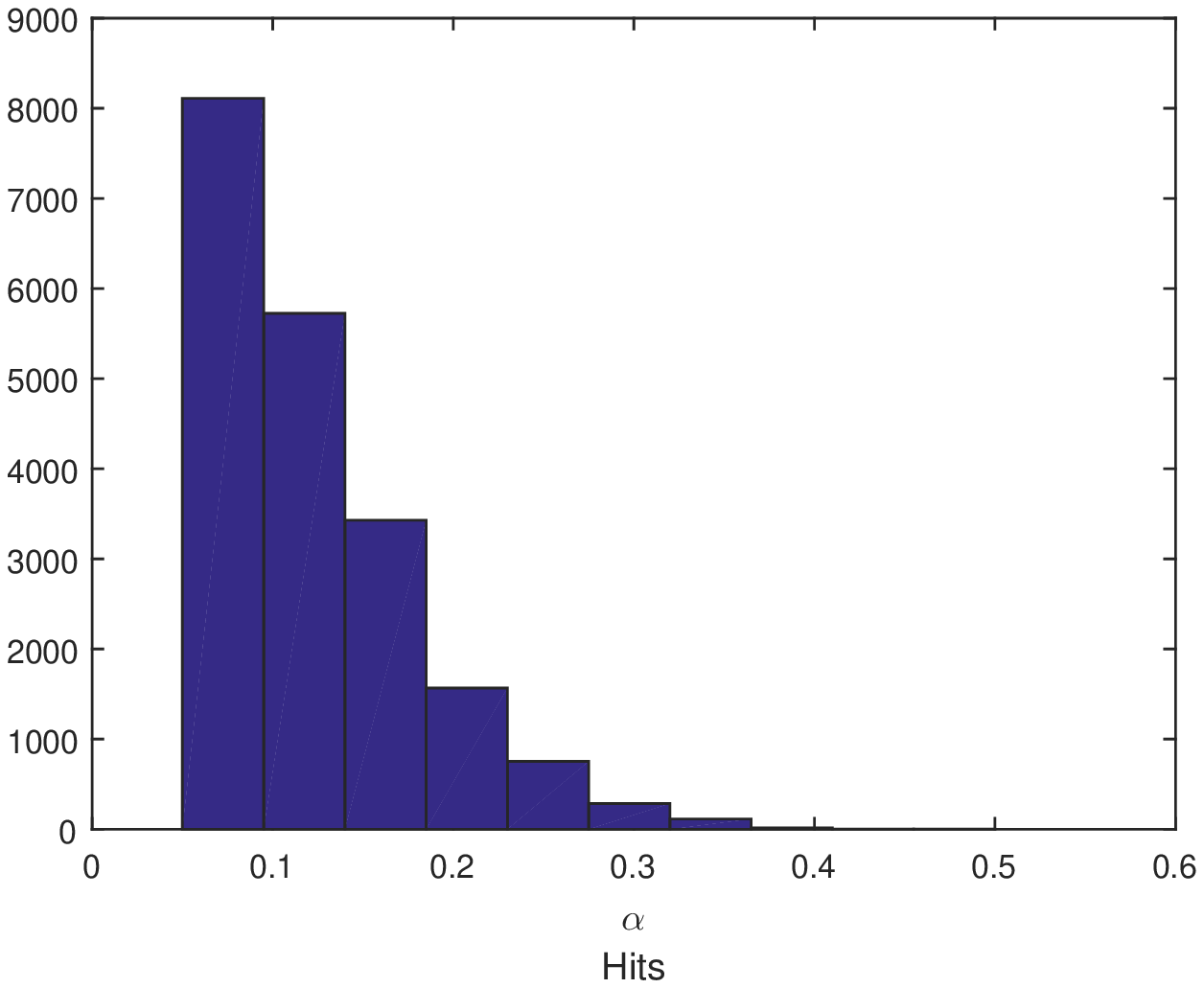}}
\caption{Posterior sample (left) and posterior histogram (right) for the music `number one' hits data set obtained by applying the Jeffreys prior (top), the loss-based prior with $M=10$ (middle) and the loss-based prior with $M=20$ (bottom).}
\label{Hits}
\end{figure}

\begin{table}[h!]
\centering
\begin{tabular}{cccc}
\hline
Prior & Mean & Median & $95\%$  C.I. \\
\hline
Jeffreys & 0.08 & 0.07 & (0.004, 0.24) \\
Loss-based $(M=10)$ & 0.13 & 0.1 & (0.1, 0.3) \\
Loss-based $(M=20)$ & 0.11 & 0.10 & (0.05, 0.25) \\
Loss-based $(M=100)$ & 0.10 & 0.08 & (0.01, 0.29) \\
\hline
\end{tabular}
\caption{Summary statistics of the posterior distribution for the parameter $\alpha$ of the analysis of the music `number one' hits.}
\label{THits}
\end{table}
This example of the music hits allows for some interesting points of dicussion. First, we note that the posterior distributions of for $\alpha$ are skewed; therefore, the posterior median represents a better centrality index than the posterior mean. Second, it is clear that the ``true'' value of $\alpha$ may be close to zero. As such, in order to explore better the parameter space when the loss-based prior is used, a denser discretization is more appropriate. We have then considered $M=100$, resulting the posterior summary statistics in Table \ref{THits}. We note now that the posterior median is similar to the one obtained using the Jeffreys prior. It is therefore recomendable that, when the inference on $\alpha$ indicates values near the parameter space boudaries, the level of discretization to be considered should be relatively dense.

\section{Discussions}
\label{Concl}
It is surprising how, from time to time, the Bayesian literature presents gaps even for problems which appear to be straightforward. The Yule--Simon distribution has undoubtedly many possibilities of application, as the discussed examples and the refereed papers show, and therefore demanded for a satisfactory discussion within the Bayesian framework.

Given the importance that objective Bayesian analysis can have in applications, and not only \citep{Berger06}, we have presented two priors which are suitable in scenarios with minimal prior information. The first prior is the Jeffreys prior which, as it is well known, has the appealing property of being invariant under monotone differentiable transformations of the parameter of interest. The second prior is derived considering the loss in information one would incur if the `wrong' model was selected. Although the latter requires a discretization of the parameter space, we have shown through simulation studies that the performance of the yielded posterior are very similar, both between the Jeffreys and the loss-based prior, and between different structures of the discretized parameter space. This is not surprising as both priors, i.e. the Jeffreys and the loss-based, have a similar behaviour, in the sense that they increase as the parameter $\alpha$ increases.

We have limited our analysis to the case where the shape parameter of the Yule--Simon distribution, $\rho$, is strictly larger than one. Doing so, we allow for a more convenient parametrization of the distribution where the new parameter $\alpha=(\rho-1)/\rho$ has the interpretation of being the probability that the next observation takes a value not observed before.

Besides through a simulation study, we have compared the objective priors by applying them on three data sets: the first related to financial data, the second to surnames in the US and the third one on the number of hits in the music industry. All comparisons allowed to show that the two proposed objective priors lead to similar results, in terms of posterior distributions. For obvious reasons, we have not considered if the choice of the Yule--Simon is the \emph{best} model to represent the data, but limited our analysis to make inference for the unknown parameter $\alpha$.

\section*{Acknowledgements}
Fabrizio Leisen was supported by the European Community's Seventh Framework Programme [FP7/2007-2013] under grant agreement no: 630677.
\vspace*{-8pt}

\bibliographystyle{biom}
\bibliography{YuSimoBiblio}

\clearpage

\renewcommand{\thesection}{A}
\renewcommand{\theequation}{A.\arabic{equation}}
\renewcommand{\thefigure}{A.\arabic{figure}}
\renewcommand{\thetable}{A.\arabic{table}}
\setcounter{table}{0}
\setcounter{figure}{0}
\setcounter{equation}{0}

\section{Appendix}
\label{Proof}
\begin{proof}[Proof of Theorem \ref{Jef1}]

First of all, we note that
\begin{align}
&\frac{\partial ^{2}\log(f(k;\alpha))}{\partial \alpha^{2}}
= \frac{1}{(1-\alpha)^{2}}+\frac{2}{(1-\alpha)^{3}}\biggl[\psi^{(0)}\biggl(\frac{1}{1-\alpha}+1\biggr)\notag \\ &\hspace{2cm}-\psi^{(0)}\biggl(\frac{1}{1-\alpha}+k+1\biggr)\biggr] + 
 \frac{1}{(1-\alpha)^{4}}\biggl[\psi^{(1)}\biggl(\frac{1}{1-\alpha}+1\biggr)\notag \\&\hspace{7cm}-\psi^{(1)}\biggl(\frac{1}{1-\alpha}+k+1\biggr)\biggr] \notag\label{2nd1}
\end{align}
where $\psi^{(i)}$ is the polygamma function:
\begin{equation}
\psi^{(i)}(x)=\frac{\partial^{i+1}}{\partial x^{i+1}} \log(\Gamma(x))=(-1)^{i+1}i! \sum_{k=0}^{\infty} \frac{1}{(x+k)^{i+1}} \quad i=1,2,\dots\notag\label{dig}.
\end{equation}
It's easy to see that 
$$\psi^{(0)}\biggl(\frac{1}{1-\alpha}+1\biggr)-\psi^{(0)}\biggl(\frac{1}{1-\alpha}+k+1\biggr)=\sum_{j=1}^{k}\frac{1}{\frac{1}{1-\alpha}+j}$$
and
$$\psi^{(1)}\biggl(\frac{1}{1-\alpha}+1\biggr)-\psi^{(1)}\biggl(\frac{1}{1-\alpha}+k+1\biggr)=\sum_{j=0}^{k-1}\frac{1}{(\frac{1}{1-\alpha}+1+j)^{2}}$$
Therefore, we have that the Fisher information is:
\begin{align}
\mathcal{I}(\alpha)& =-\frac{1}{(1-\alpha)^{2}}+\frac{2}{(1-\alpha)^{3}}\mathbb{E}_{\alpha}\biggl[\sum_{j=1}^{k}\frac{1}{\frac{1}{1-\alpha}+j}\biggr] \notag \\
& \hspace{2cm}- \frac{1}{(1-\alpha)^{4}}\mathbb{E}_{\alpha}\biggl[\sum_{j=0}^{k-1}\frac{1}{(\frac{1}{1-\alpha}+1+j)^{2}} \biggr]  \label{Fisher}
\end{align}
In order to compute the Jeffreys prior, we need compute the two expected value of equation \ref{Fisher} separately. 

\begin{align}
\mathbb{E}_{\alpha}\biggl[\sum_{j=1}^{k}\frac{1}{(\frac{1}{1-\alpha}+j)} \biggr]&= \sum_{k=1}^{\infty} \sum_{j=1}^{k} \frac{1}{(\frac{1}{1-\alpha}+j)} \frac{1}{1-\alpha} B\biggl(k,\frac{1}{1-\alpha}+1\biggr) \notag \\
&=\sum_{j=1}^{\infty} \frac{1}{(\frac{1}{1-\alpha}+j)}  \sum_{k=j}^{\infty} \frac{1}{1-\alpha} B\biggl(k,\frac{1}{1-\alpha}+1\biggr) \label{pippo1}
\end{align}
The second summation with respect to $k$ in equation \ref{pippo1} can be rewritten as:
\begin{align}
\sum_{k=j}^{\infty}\frac{1}{1-\alpha} B\biggl(k,\frac{1}{1-\alpha}+1\biggr)&= \sum_{k=j}^{\infty}\frac{1}{1-\alpha} \int_{0}^{1} x^{k-1}(1-x)^{\frac{1}{1-\alpha}}dx \notag \\
&= \sum_{l=0}^{\infty} \frac{1}{1-\alpha} \int_{0}^{1}x^{l}x^{j-1}(1-x)^{\frac{1}{1-\alpha}}dx \notag \\
&=\frac{1}{1-\alpha} \int_{0}^{1}x^{j-1}(1-x)^{\frac{1}{1-\alpha}-1} dx\notag \\
&=\frac{\Gamma(j)\Gamma(\frac{1}{1-\alpha}+1)}{\Gamma(\frac{1}{1-\alpha}+j)} \label{SumInterna}
\end{align}
Finally, we have that :
\begin{align}
\mathbb{E}_{\alpha}\biggl[\sum_{j=1}^{k}\frac{1}{(\frac{1}{1-\alpha}+j)} \biggr]&=\Gamma(\frac{1}{1-\alpha}+1)  \sum_{j=1}^{\infty} \frac{\Gamma(j)}{\Gamma(\frac{1}{1-\alpha}+j) (\frac{1}{1-\alpha}+j)} \notag \\
&=\Gamma\biggl(\frac{1}{1-\alpha}+1\biggr)  \sum_{j=1}^{\infty} \frac{\Gamma(j)}{\Gamma(\frac{1}{1-\alpha}+j+1)} \notag \\
&=\sum_{j=1}^{\infty}\biggl(\frac{1-\alpha}{1-\alpha}\biggr) \frac{\Gamma(\frac{1}{1-\alpha}+1)\Gamma(j)}{\Gamma(\frac{1}{1-\alpha}+j+1)}\notag\\
&=(1-\alpha) \sum_{j=1}^{\infty}\frac{1}{1-\alpha} \frac{\Gamma(\frac{1}{1-\alpha}+1)\Gamma(j)}{\Gamma(\frac{1}{1-\alpha}+j+1)}= \notag \\
&=(1-\alpha) \label{Fin1st}
\end{align}
where the summation $\sum_{j=1}^{\infty} (\frac{1}{1-\alpha})B(j,\frac{1}{1-\alpha}+1)=1$, since we are summing over all the possible values of the probability function of the Yule-Simon distribution.

As we have done with the first expected value of \ref{Fisher}, now we compute the second expected value of equation \ref{Fisher}:
\begin{align}
\mathbb{E}_{\alpha}\biggl[\sum_{j=0}^{k-1}\frac{1}{(\frac{1}{1-\alpha}+1+j)^{2}} \biggr]&= \sum_{k=1}^{\infty} \sum_{j=0}^{k-1} \frac{1}{(1+\frac{1}{1-\alpha}+j)^{2}} \frac{1}{1-\alpha} \frac{\Gamma(k)\Gamma(\frac{1}{1-\alpha}+1)}{\Gamma(\frac{1}{1-\alpha}+1+k)} \notag \\
&=\sum_{k=1}^{\infty} \sum_{j=1}^{k} \frac{1}{(\frac{1}{1-\alpha}+j)^{2}} \biggl(\frac{1}{1-\alpha} \biggr)\frac{\Gamma(k)\Gamma(\frac{1}{1-\alpha}+1)}{\Gamma(\frac{1}{1-\alpha}+1+k)} \notag \\
&=\sum_{j=1}^{\infty} \frac{1}{(\frac{1}{1-\alpha}+j)^{2}}  \sum_{k=j}^{\infty} \frac{1}{1-\alpha} B\biggl(k,\frac{1}{1-\alpha}+1\biggr)\label{pippo}\notag\\
&=\sum_{j=1}^{\infty} \frac{1}{(\frac{1}{1-\alpha}+j)^{2}}\frac{\Gamma(j)\Gamma(\frac{1}{1-\alpha}+1)}{\Gamma(\frac{1}{1-\alpha}+j)}
\end{align}
where the last equality follows from \eqref{SumInterna}.Finally we obtain the following form:
\begin{align}
\mathbb{E}_{\alpha}\biggl[\sum_{j=0}^{k-1}\frac{1}{(\frac{1}{1-\alpha}+1+j)^{2}} \biggr]= \sum_{j=1}^{\infty} \frac{B(j,\frac{1}{1-\alpha}+1)}{\frac{1}{1-\alpha}+j}\label{Hyper}
\end{align}

The equation \ref{Hyper} can be written in a more simple way as a function of an Hypergeometric function. 
\begin{align}
\sum_{j=1}^{\infty} \frac{B(j,\frac{1}{1-\alpha}+1)}{\frac{1}{1-\alpha}+j}&=\sum_{j=1}^{\infty} \frac{1}{\frac{1}{1-\alpha}+j}\int_{0}^{1}x^{j-1}(1-x)^{\frac{1}{1-\alpha}} dx\notag \\
&=\int_{0}^{1} (1-x)^{\frac{1}{1-\alpha}}\sum_{j=1}^{\infty} \frac{1}{\frac{1}{1-\alpha}+j}x^{j-1} dx \label{HH}
\end{align}
Looking at the summation we have:
\begin{align}
\sum_{j=1}^{\infty}& \frac{1}{\frac{1}{1-\alpha}+j}x^{j-1} dx =\sum_{l=0}^{\infty} \frac{x^{l}}{\frac{1}{1-\alpha}+l+1} \label{Poch}
\end{align}
But the denominator can be written as a ratio of Pochhammer representations:
\begin{equation}
\frac{1}{\frac{1}{1-\alpha}+l+1}=\frac{\bigl(\frac{1}{1-\alpha}+1\bigr)_{l}}{\bigl(\frac{1}{1-\alpha}+2\bigr)_{l}} \frac{1}{\frac{1}{1-\alpha}+1} \notag
\end{equation}
Hence the equation \ref{Poch} is written as:
\begin{align}
\sum_{l=0}^{\infty}\frac{x^{l}}{\frac{1}{1-\alpha}+l+1}& =\frac{1}{\frac{1}{1-\alpha}+1}\sum_{l=0}^{\infty} \frac{\bigl(\frac{1}{1-\alpha}+1\bigr)_{l}}{\bigl(\frac{1}{1-\alpha}+2\bigr)_{l}} x^{l}\frac{(1)_{l}}{l!}= \notag \\
&= \frac{1}{\frac{1}{1-\alpha}+1} \, _{2}F_{1}\biggl(1,\frac{1}{1-\alpha}+1,\frac{1}{1-\alpha}+2,x\biggr) \notag
\end{align}
where $_{2}F_{1}(\alpha,\beta,\gamma,x)$ is the hypergeometric function. So we have that equation \ref{HH} can be rewritten as:
\begin{align}
\int_{0}^{1} &\frac{(1-x)^{\frac{1}{1-\alpha}}}{\frac{1}{1-\alpha}+1} \, _{2}F_{1}\biggl(1,\frac{1}{1-\alpha}+1,\frac{1}{1-\alpha}+2,x\biggr)= \notag \\
&=\frac{(1-\alpha)^{2}}{(2-\alpha)^{2}}\, _{3}F_{2}\biggl(1,\frac{1}{1-\alpha}+1,1; \frac{1}{1-\alpha}+2,\frac{1}{1-\alpha}+2;1\biggr) \notag
\end{align}
where the last equality follows from 7.512.5 of \cite{GR07}. Summing up,
\begin{align}
\mathcal{I}(\alpha)= -\frac{1}{(1-\alpha)^{2}}&+\frac{2}{(1-\alpha)^{3}}(1-\alpha) +\notag \\
- \frac{1}{(1-\alpha)^{4}}\frac{(1-\alpha)^{2}}{(2-\alpha)^{2}}&\, _{3}F_{2}\biggl(1,\frac{1}{1-\alpha}+1,1; \frac{1}{1-\alpha}+2,\frac{1}{1-\alpha}+2;1\biggr) \notag 
\end{align}
and this concludes the proof.
\end{proof}

\end{document}